\newtheorem{defn}{Definition}
\newtheorem{lem}[defn]{Lemma}
\newtheorem{prop}[defn]{Proposition}
\newtheorem{cor}[defn]{Corollary}
\providecommand{\R}{\ensuremath \mathbb{R}}
\providecommand{\N}{\ensuremath \mathbb{N}}
\newcommand{\idxset}{\mathcal{I}}
\newcommand{\pN}{\mathcal{N}}
\newcommand{\regtext}[1]{\mathrm{\textnormal{#1}}}
\newcommand{\defemph}[1]{\emph{#1}}
\newcommand{\ts}[1]{\textsuperscript{#1}}
\DeclareMathOperator*{\argmax}{arg\,max}
\newcommand{\card}[1]{\left\vert#1\right\vert}
\newcommand{\norm}[1]{\left\Vert#1\right\Vert}
\newcommand{\abs}[1]{\left\vert#1\right\vert}
\newcommand{\pow}[1]{\regtext{pow}\left(#1\right)}
\newcommand{\diag}[1]{\regtext{diag}\!\left(#1\right)}
\newcommand{\eig}[1]{\regtext{eig}\!\left(#1\right)}
\newcommand{\union}{\bigcup}
\newcommand{\trans}{^\top}
\newcommand{\inv}{^{-1}}
\newcommand{\pinv}{^{\dagger}}
\newcommand{\bigO}{\mathcal{O}}
\newcommand{\bdry}[1]{\partial#1}
\renewcommand{\ker}[1]{\regtext{ker}\!\left(#1\right)}
\newcommand{\convhull}[1]{\regtext{CH}\!\left(#1\right)}
\newcommand{\emptyarr}{[\ ]}
\newcommand{\zeros}{{0}}
\newcommand{\ones}{{1}}
\newcommand{\eye}{\regtext{I}}
\newcommand{\coef}{\beta}
\newcommand{\slackvar}{\coef_{\regtext{s}}}
\newcommand{\zonofn}[1]{\mathcal{Z}\!\left(#1\right)}
\newcommand{\ellipsoidfn}[1]{\mathcal{E}\!\left(#1\right)}
\newcommand{\conzonofn}[1]{\mathcal{CZ}\!\left(#1\right)}
\newcommand{\conpolyzonofn}[1]{\mathcal{CPZ}\!\left(#1\right)}
\newcommand{\ellifn}[2]{\mathcal{E}_{#1}\left(#2\right)}
\newcommand{\ballfn}[2]{\mathcal{B}_{#1,#2}}
\newcommand{\ballprodfn}[1]{\mathcal{B}_\times\!\left(#1\right)}
\newcommand{\hpfn}[1]{\mathcal{P}\!\left(#1\right)}
\newcommand{\hspacefn}[1]{\mathcal{H}\!\left(#1\right)}
\newcommand{\tobdryfn}[2]{\regtext{bdproj}_{#1}\!\left(#2\right)}
\newcommand{\costfunc}[1]{\regtext{cost}\!\left(#1\right)}
\newcommand{\rayfn}[1]{\mathcal{R}\!\left(#1\right)}
\newcommand{\ch}{_\regtext{\tiny{CH}}}
\newcommand{\rob}{_{\regtext{rob}}}
\newcommand{\uncrt}{_{\regtext{unc}}}
\newcommand{\reduce}{_{\regtext{r}}}
\newcommand{\keep}{_{\regtext{keep}}}
\newcommand{\ndim}{n}
\newcommand{\ngen}{m}
\newcommand{\ncon}{k}
\newcommand{\nsum}{v}
\newcommand{\nplot}{{n_{\regtext{plot}}}}
\newcommand{\idx}[1]{{\langle#1\rangle}} 
\newcommand{\st}{\regtext{ s.t. }}
\newcommand{\new}[1]{#1}
\newcommand{\newnc}[1]{#1}
\begin{document}

\title{Ellipsotopes: Uniting Ellipsoids and Zonotopes for Reachability Analysis and Fault Detection
}

\author{Shreyas Kousik$^1$ ,
Adam Dai$^2$,
and Grace X. Gao$^1$ 
\thanks{$^1$ Aeronautics and Astronautics, Stanford University, Stanford, CA.}
\thanks{$^2$ Electrical Engineering, Stanford University, Stanford, CA.}
\thanks{Corresponding author: \texttt{gracegao@stanford.edu}.}
}

\maketitle

\thispagestyle{plain}
\pagestyle{plain} 

\begin{abstract}
Ellipsoids are a common representation for reachability analysis, because they can be transformed efficiently under affine maps, and allow conservative approximation of Minkowski sums, which let one incorporate uncertainty and linearization error in a dynamical system by expanding the size of the reachable set.
Zonotopes, a type of symmetric, convex polytope, are similarly frequently used due to efficient numerical implementation of affine maps and exact Minkowski sums.
Both of these representations also enable efficient, convex collision detection for fault detection or formal verification tasks, wherein one checks if the reachable set of a system collides (i.e., intersects) with an unsafe set.
However, both representations often result in conservative representations for reachable sets of arbitrary systems, and neither is closed under intersection.
Recently, representations such as constrained zonotopes and constrained polynomial zonotopes have been shown to overcome some of these conservativeness challenges, and are closed under intersection.
However, constrained zonotopes can not represent shapes with smooth boundaries such as ellipsoids, and constrained polynomial zonotopes can require solving a non-convex program for collision checking or fault detection.
This paper introduces \textit{ellipsotopes}, a set representation that is closed under affine maps, Minkowski sums, and intersections.
Ellipsotopes combine the advantages of ellipsoids and zonotopes while ensuring convex collision checking.
The utility of this representation is demonstrated on several examples.
\end{abstract}

\renewcommand{\qedsymbol}{$\blacksquare$}

\section{Introduction}

In the controls, robotics, and navigation communities, it is often critical to place strict guarantees on the behavior of a dynamical system.
Example applications of such guarantees include collision avoidance \cite{kousik2020bridging_ijrr,althoff2010_dissertation,shetty2020_stoch_reach,chen2021fastrack}, fault detection \cite{bhamidipati2020_stoch_reach,scott2016constrained_zonotopes}, and control invariance \cite{ames2016control,smit2019walking,chen2021fastrack}.
A common strategy for enforcing guarantees is to compute the system's reachable set of states, then check that system measurements lie within this set (e.g., for fault detection) or the set obeys non-intersection constraints (e.g., for collision avoidance).
Directly representing a continuum of possible system trajectories numerically is typically intractable, given that these trajectories are solutions to a nonlinear differential or difference equation.
Instead, a variety of set representations have been introduced to enable approximating reachable sets.
Two of the most common and well-studied representations are ellipsoids \cite{kurzhanski2000ellipsoidal,kurzhanskiy2006ellipsoidal_toolbox} and zonotopes \cite{combastel2005zono_state_observer,girard2005reachability_zono,althoff2010_dissertation}.
In this work, an ellipsoid is best understood as an affine transformation of a unit 2-norm ball in an arbitrary-dimensional Euclidean space.
A zonotope can similarly be understood as the affine transformation of the unit $\infty$-norm ball, resulting in a symmetric polytope.
We propose a set representation, \defemph{ellipsotopes}, by generalizing to arbitrary $p$-norms, as shown in Fig. \ref{fig:ellipsotope_increasing_norm}.

\begin{figure}[t]
    \centering
    \includegraphics[width=0.8\columnwidth]{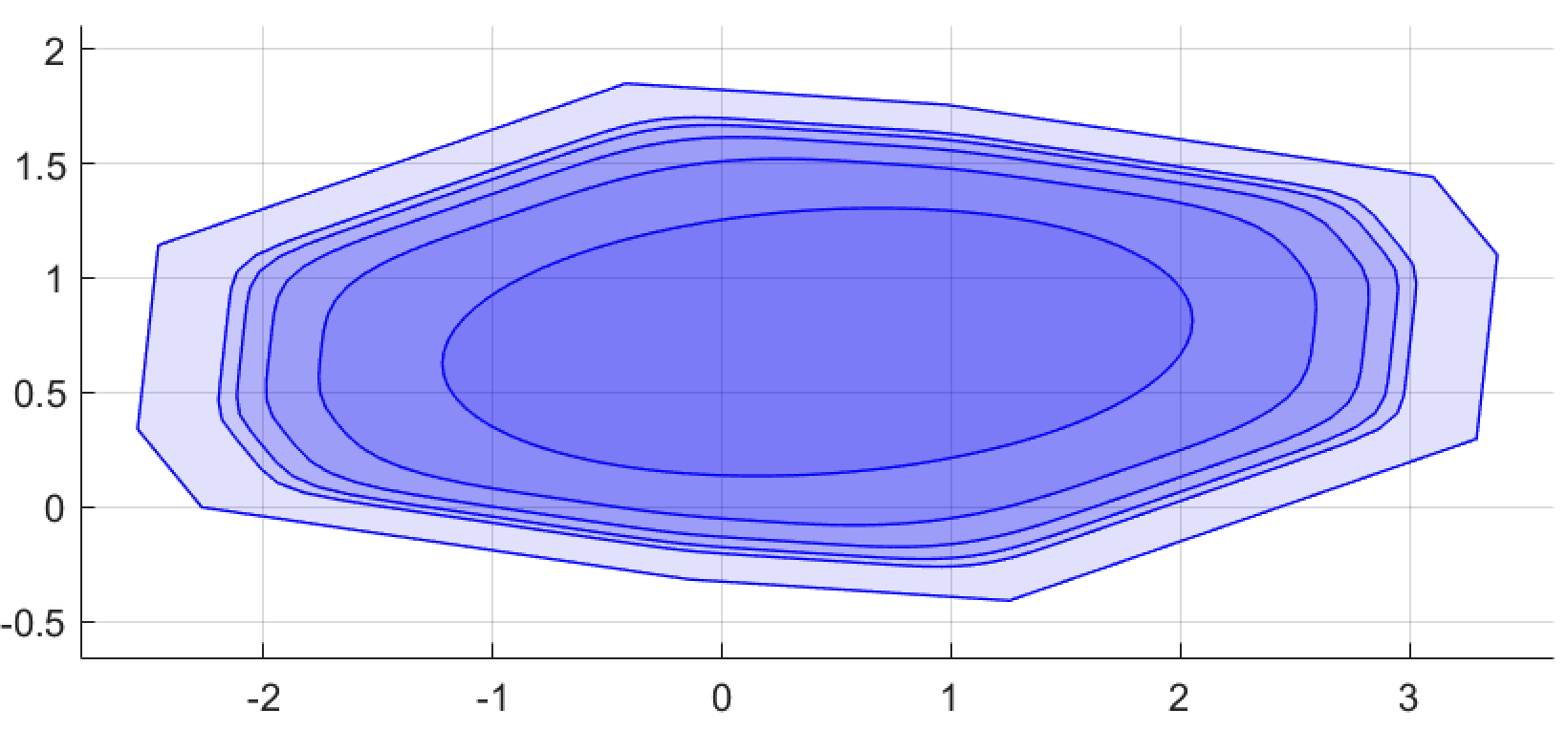}
    \caption{Basic ellipsotopes with five generators and increasing $p$-norm ($p = 2,4,\cdots,10$), shown with lighter blue as the norm increases.
    The outermost shape is the $\infty$-norm zonotope, and the innermost shape is the 2-norm ellipsoid.}
    \label{fig:ellipsotope_increasing_norm}
    \vspace*{-0.3cm}
\end{figure}

\subsection{Ellipsoids and Zonotopes}

Both ellipsoids and zonotopes provide straightforward numerical implementations of operations that are commonly-used for reachability analysis, fault detection, and similar tasks.
For example, both representations can be transformed readily via affine maps, thereby representing the flow of a (linearized) dynamical system.
Furthermore, one can apply convex programming to efficiently detect when these sets intersect with e.g., obstacles for collision avoidance \cite{guibas2003zonotopes,scott2016constrained_zonotopes,kurzhanskiy2006ellipsoidal_toolbox}.
However, choosing between the two representations comes with certain tradeoffs.
For example, zonotopes are closed under Minkowski sums, which are used to incorporate uncertainty and linearization error, while ellipsoids are not.
On the other hand, ellipsoids can exactly represent confidence level sets of Gaussian distributions, while zonotopes cannot. 

Note, we present a more detailed discussion of other set representations, both convex and non-convex, in Section \ref{sec:related_work}.
Out of the convex representations, we consider ellipsoids and zonotopes the best-suited for reachability and fault detection tasks.
For the non-convex representations, we typically lose the ability to perform efficient, convex collision-checking.

\subsection{Contributions and Paper Organization}

Our main contribution is a novel set representation called the \defemph{ellipsotope}, which combines the advantages of both ellipsoids and zonotopes at the expense of slightly more computational cost, but lower conservativeness.
This is useful, for example, when some uncertainties in a dynamical system are Gaussian (i.e., ellipsoidal) whereas other uncertainties are bounded and polytopic, as often arises in autonomous navigation \cite{shetty2020_stoch_reach,bhamidipati2020_stoch_reach,althoff2010_dissertation}.
For the purposes of reachability analysis and fault detection, we show that ellipsotopes are closed under linear maps, Minkowski sums, and intersections.
We also present order reduction strategies for managing ellipsotope complexity, which can grow during reachability analysis.
We demonstrate ellipsotopes' utility, and their advantages over ellipsoids or zonotopes alone, with several numerical examples.
The code for our examples is open source\footnote{\url{https://github.com/Stanford-NavLab/ellipsotopes}}.


Section \ref{sec:related_work} discusses a variety of set representations to clarify the context for ellipsotopes.
Section \ref{sec:preliminaries} introduces notation and set representations relevant to developing ellipsotopes.
Section \ref{sec:etopes} defines ellipsotopes and discusses properties and closed operations.
Section \ref{sec:order_reduction} presents methods for reducing ellipsotope complexity, which may grow due to the operations in Sec. \ref{sec:etopes}.
Section \ref{sec:numerical_examples} covers numerical examples and applications of ellipsotopes.
Section \ref{sec:conclusion} concludes the paper.
We provide additional properties of ellipsotopes along with strategies for visualization in the appendix.
\section{Related Work}\label{sec:related_work}

A variety of convex and non-convex set representations exist for reachability analysis and fault detection.
We now briefly discuss these representations, and under which operations they are \defemph{closed}, meaning that an operation yields an instance of the same representation.
\new{Note, a thorough review is available \cite{althoff2021set}.}

\subsection{Convex Set Representations}

Convex representations enable one to use convex programming with convergence guarantees to evaluate intersection and set membership, e.g., to check if a system's reachable set lies within a safe region.
In particular, we discuss ellipsoids, convex polytopes, and support functions.

As mentioned before, ellipsoids are affine transformations of the unit 2-norm ball.
This set representation is closed under affine transformations and hyperplane intersections \cite{kurzhanski2000ellipsoidal}.
For operations such as Minkowski sum, intersection, Pontryagin (Minkowski) difference, and convex hull, efficient algorithms exist to generate inner- and outerapproximative ellipsoids \cite{kurzhanskiy2006ellipsoidal_toolbox,halder2018parameterized_ellipsoid_approx,yildirim2006minimum_vol_ellipsoid_convhull}.
Most importantly, for tasks such as reachability analysis, confidence level sets of multivariate Gaussian distributions are ellipsoidal.
Unfortunately, ellipsoidal representations of reachable sets can rapidly become conservative due to the overapproximation required for Minkowski sums.
Furthermore, ellipsoids are not well-suited to representing polytopic sets such as occupancy grids, which are commonly used for tasks such as robot motion planning.

Convex polytopes can be thought of as the bounded intersection of a collection of affine halfspaces in arbitrary dimensions (H-representation); note, an unbounded intersection is called a polyhedron \cite{sadraddini2019linear_polytope_containment}.
Another common representation is as the convex hull of a set of vertices (V-representation).
This broad category of objects is closed under Minkowski sum, intersection, Pontryagin difference, and convex hull \cite{kvasnica2004mpt_toolbox}.
The H-representation is especially convenient for determining if a polytope contains a point and performing intersections.
However, the remaining operations are not computationally efficient, especially  in high dimensions or when a convex polytope is defined by a large number of halfspaces.

To avoid these challenges, zonotopes have become a popular representation that enable efficient Minkowski sums and set containment queries \cite{combastel2005zono_state_observer,girard2005reachability_zono,althoff2010_dissertation,guibas2003zonotopes,sadraddini2019linear_polytope_containment}.
A zonotope is a centrally-symmetric convex polytope constructed as a Minkowski sum of line segments.
Zonotopes can be parameterized by a center and generator (see \eqref{eq:zono_defn} in Section \ref{sec:preliminaries}), which is called a CG-representation; any point in the zonotope is the center plus a linear combination of the generators, each scaled by a coefficient in $[-1,1]$.
\new{For zonotopes, set containment (checking if one set is a subset of another) can be approximated in general \cite{sadraddini2019linear_polytope_containment} or solved in polynomial time by fixing the number of generators of one input zonotope \cite{kulmburg2021co}.
Since zonotopes are not closed under intersection or Pontryagin difference, researchers have introduced zonotope bundles \cite{althoff2011zonotope}, AH-polytopes \cite{sadraddini2019linear_polytope_containment} and constrained zonotopes \cite{scott2016constrained_zonotopes}.
A zonotope bundle stores each zonotope participating in an intersection.
An AH-polytope is the affine transformation of an H-representation of a polytope (e.g., a zonotope is the affine transformation of a hypercube).
A constrained zonotope is a zonotope with additional linear constraints on its coefficients, and can represent any convex polytope \cite[Thm. 1]{scott2016constrained_zonotopes}}.
These representations are closed under affine transformation, Minkowski sum, intersection, and, for constrained zonotopes, Pontryagin difference and convex hull \cite{raghuraman2020set_ops_conzono}.
Set membership or intersection can be evaluated with linear programming \cite{scott2016constrained_zonotopes}.
While zonotope bundles, AH-polytopes, and constrained zonotopes overcome many of the challenges of zonotopes, they cannot represent sets with curved boundaries.

Support functions enable one to represent arbitrary convex sets, allowing generalization beyond polytopes and ellipsoids \cite{girard2008efficient_reach_spt_func,le2009reachability_support_funcs,le2010reachability_support_funcs}.
A support function is a convex function that maps a vector in Euclidean space to the maximum dot product between that vector and any element in a convex set, thus representing the set implicitly.
Support functions of many convex sets, such as unit balls, ellipsoids, and zonotopes, have a simple analytical form, and support functions of polytopes can be expressed as the solution of a linear program \cite{le2009reachability_support_funcs}.
Furthermore, affine maps, Minkowski sums, and convex hulls have analytic formulations.
Unfortunately, the intersection of sets represented by support functions can only be overapproximated and may be non-convex \cite[Prop. 4]{le2010reachability_support_funcs}, so using intersection for collision-checking and fault detection is neither straightforward nor conservative.


\subsection{Non-Convex Set Representations}

The reachable set of a dynamical system is not necessarily convex.
Furthermore, robots and other autonomous systems frequently have non-convex bodies, and such systems are not necessarily subject to convex constraints for fault detection or collision avoidance.
A variety of non-convex set representations exist that attempt to address these challenges.
In particular, we discuss polynomial zonotopes, star sets, level sets, and constructive solid geometry (CSG).

Polynomial zonotopes (PZs) are a generalization of zonotopes wherein the coefficients of a zonotope's generators are instead monomials \cite{althoff2013poly_zono,kochdumper2020sparse_poly_zono}.
\new{By leveraging a center/generator structure, these sets are closed under affine transformation, Minkowski sum, convex hull.
One can add polynomial constraints on the coefficients to make constrained polynomial zonotopes (CPZs), which are additionally closed under intersections and unions \cite{kochdumper2020cons_poly_zono}.}
PZs and CPZs provide less conservative approximations of reachable sets than zonotopes, at the expense of being non-convex (so, collision checking requires solving a non-convex program).
One alternative is to overapproximate a PZ or CPZ with a zonotope \cite{holmes2020reachable}, resulting in a convex collision check at the expense of conservativeness.

Star sets also generalize zonotopes and ellipsoids to instead use a logical predicate constraint on the generator coefficients \cite{duggirala2016parsimonious_star_set,bak2017simulation_star_set,tran2019star_sets}.
\new{These sets can be non-convex, and are closed under affine transformation, Minkowski sum, and intersection; but, intersections may not be algorithmically tractable for arbitrary logical predicates \cite{althoff2021set}.
Similarly, collision checking may require solving a non-convex problem.}

Departing from center/generator representations, level sets are a popular representation for reachability analysis, because arbitrary sets can be represented as the 0-sublevel set of a function.
Such a function can be approximated on a grid \cite{mitchell2005hjb,mitchell2007ls_toolbox} or as a polynomial \cite{lasserre2009moments}.
Level sets can be used to conservatively compute reachable sets of dynamic systems subject to uncertainty \cite{majumdar2014convex,chen2021fastrack,kousik2020bridging_ijrr,holmes2016convex}.
In special cases, one can represent collision checking as a polynomial evaluation \cite{kousik2020bridging_ijrr}; in general, Minkowski sums, intersections, and convex hulls can be approximated using sums-of-squares programming.
Level set methods typically do not require linear maps and Minkowski sums for reachability analysis, instead requiring one to approximately solve a partial differential equation.
Furthermore, they suffer the curse of dimensionality for nonlinear systems with more than 5 dimensions \cite{chen2021fastrack,kousik2020bridging_ijrr}.

Constructive solid geometry (CSG) is used to model non-convex shapes in computer graphics by leveraging implicit point membership classification functions to express geometric primitives such as spheres, prisms, and cones \cite{requicha1977constructive,foley1996computer_graphics}.
Non-convex bodies are represented as unions, intersections, and set differences of primitives, which can also be approximated with smooth functions \cite{lutz2021efficient_optimization_based_collision_avoidance}.
This representation has been applied to reachability, with similar advantages and drawbacks to support functions \cite{mitchell2012csg_with_ls_toolbox,lutz2021efficient_optimization_based_collision_avoidance}.
For these sets, computing Minkowski sums is challenging; furthermore, these representations are typically limited to 2-D or 3-D settings, and it is unclear how to reduce the growing complexity of a reachable set in a similar way to zonotope order reduction.

\subsection{Summary}

From this review of a wide variety of representations, we identify several advantages and challenges.
The advantages of zonotopes and similar objects is their numerical simplicity for representing affine transformations, Minkowski sums, and collision/emptiness checking (via intersection and convex programming).
The challenges are to represent smooth or non-polytopic sets without incurring conservativeness (as with ellipsoids) or non-convexity (as with polynomial zonotopes).
Our proposed ellipsotope representation directly addresses this tradeoff by enabling efficient reachability and fault detection operations for both polytope-like and ellipsoid-like objects without introducing challenges from losing convexity.
\new{In particular, when sets are given as both polytopes and ellipsoids (see \cite{shetty2020_stoch_reach,gassmann2020scalable_ellipsoid_zono_conversion} as examples), we can represent them as ellipsotopes, then propagate and manipulate via the operations outlined in this paper either conservatively or exactly, while always ensuring convex collision checking.}

\section{Preliminaries}\label{sec:preliminaries}

We now introduce notation and several set representations.

\subsection{Notation}\label{subsec:notation}

Scalars and vectors are lowercase and italic.
Sets and matrices are uppercase italic.
The real numbers are $\R$, and the natural numbers are $\N$.
If $n \in \N$, we denote $\N_n = \{1,2,\cdots,n\} \subset \N$.
The $p$-norm unit ball in $\R^n$ is 
\begin{align}
    \ballfn{p}{n} = \{x \in \R^n\ |\ \norm{x}_p \leq 1\}.
\end{align}
An affine subspace (i.e., affine hyperplane) of $\R^n$ parameterized by $H \in \R^{k\times n}$, $k\in\N$, $k < n$, and $f \in \R^m$ is
\begin{align}
    \hpfn{H,f} = \left\{x \in \R^n\ |\ Hx = f \right\}.
\end{align}
A halfspace parameterized by $h \in \R^n$ and $s \in \R$ is 
\begin{align}
    \hspacefn{h,s} = \left\{x \in \R^n\ |\ h\trans x \leq s \right\}.    
\end{align}

Let $A$ be a set such that $A \subset \R^n$.
Its power set is $\pow{A}$, its cardinality is $\card{A}$ and its boundary is $\bdry{A}$.
Let $B \subset \R^n$ as well.
The Minkowski sum is $A \oplus B = \{a+b\ |\ a \in A,\ b \in B\}$.

Consider a set of integers $J = \{j_1,j_2,\cdots,j_n\} \subset \N$ and $m \in \N$; then $J + m = \{j_1+m,\cdots,j_n+m\}$.
Similarly, consider a set of sets of integers $\idxset = \{J_1,J_2,\cdots,J_n\}\subset \pow{\N}$.
We denote $\idxset + m$ to mean $\{J_1 + m, J_2 + m, \cdots, J_n + m\}$.

An $n\times m$ matrix of ones is $\ones_{n\times m}$.
Similarly, a matrix of zeros is $\zeros_{n\times m}$.
An $n\times n$ identity matrix is $\eye_{n}$.
Let $v \in \R^n, w \in \R^m$; we denote vector concatenation by $(v,w) \in \R^{n+m}$.
The $\diag{\cdot}$ operator places its arguments (block) diagonally on a matrix of zeros.
The $\eig{\cdot}$ operator returns a column vector containing the eigenvalues of its input matrix.
The $\det(A)$ operator returns the determinant of a square matrix $A$.
\new{For a positive semi-definite (PSD) square matrix $A \succ 0$, $B = \sqrt{A}$ is a PSD square matrix such that $B\trans B = A$.}

Let $v \in \R^n$ and $J \subset \N_n$.
Then $v\idx{J} \in \R^{\card{J}}$ is the vector of elements of $v$ indexed by $J$.
Similarly, if $A \in \R^{n\times m}$, $J_1 \subset \N_n$, and $J_2 \subset \N_m$, then $A\idx{J_1,J_2}$ is the $\card{J_1}\times\card{J_2}$ sub-matrix of $A$.
We denote $A\idx{J,:}$ as the $\card{J}\times m$ submatrix of $A$ (that is, the $J$ rows and all the columns), and $A\idx{:,J}$ similarly selects all rows and $J$ columns.

We denote ``big O'' complexity with $\bigO(\cdot)$.


\subsection{Set Representations}

An \defemph{ellipsoid} is the set
\begin{align}\label{eq:ellipsoid}
    \ellipsoidfn{c,Q} = \left\{x \in \R^\ndim\ |\ (x-c)\trans Q(x-c) \leq 1 \right\}.
\end{align}
We call $c$ its \defemph{center} and positive definite $Q \succ 0$ its \defemph{shape matrix}.
Note, some definitions use $Q\inv$ instead \cite{halder2018parameterized_ellipsoid_approx,gassmann2020scalable_ellipsoid_zono_conversion}.

A \defemph{zonotope} $\zonofn{c,G} \subset \R^n$ is a convex, symmetrical polytope parameterized by a \defemph{center} $c \in \R^n$ and a \defemph{generator matrix} $G \in \R^{n\times \ngen}$, given by
\begin{align}\label{eq:zono_defn}
    \zonofn{c,G} = \left\{c + G\beta\ |\ \norm{\beta}_\infty \leq 1\right\}.
\end{align}
That is, a zonotope is a set of convex combinations of $c$ with the columns of the matrix $G$, which we call \defemph{generators}.
We call $\beta$ the generator \defemph{coefficients}.

A \defemph{constrained zonotope} is a similar representation, but can represent any convex polytope \cite{scott2016constrained_zonotopes}.
Let $A \in \R^{k\times m}$ and $b \in \R^k$, where $k \in \N$ is the number of linear constraints.
We denote a constrained zonotope as
\begin{align}
    \conzonofn{c,G,A,b} = \left\{c + G\beta  \in \R^n\ |\ \norm{\beta}_\infty \leq 1,\ A\beta = b\right\},
\end{align}
where $c$ and $G$ are the same as for zonotopes above.

\section{Ellipsotopes}\label{sec:etopes}

In this section, we define ellipsotopes, then discuss several useful properties.
We then discuss the specific case of ellipsotopes defined using a 2-norm and conclude the section by relating ellipsotopes to other set representations.

\subsection{Definition}

To define ellipsotopes, we first introduce \defemph{index sets}.

\begin{defn}
Let $\ngen \in \N$. 
Let $\idxset \subset \pow{\N_\ngen}$ be a partition of $\N_\ngen$.
We call $\idxset$ an \defemph{index set}.
That is, $\idxset$ is a set of multi-indices such that $\N_\ngen = \union_{J \in \idxset} J$ and $J_1 \cap J_2 = \emptyset$ for any $J_1, J_2 \in \idxset$.
\end{defn}
\noindent In other words, every integer from $1$ to $\ngen$ occurs in exactly one subset $J \in \idxset$.
For example, if $\ngen = 3$, $\idxset = \{\{1,2\},\{3\}\}$ obeys the definition.
\newnc{Numerically, we store $\idxset$ as a list of lists.}


We now define ellipsotopes:

\begin{defn}\label{def:ellipsotope}
Let $c \in \R^n$, $G \in \R^{n\times\ngen}$, $A \in \R^{\ncon\times\ngen}$, $b \in \R^\ncon$, and let $\idxset$ be a valid index set.
An \defemph{ellipsotope} is a set
\begin{align}\begin{split}\label{eq:elli_defn}
    \ellifn{p}{c,G,A,b,\idxset} = \big\{c+ G\coef\ |\ &\norm{\coef\idx{J}}_p \leq 1\ \forall\ J \in \idxset\\
        &\regtext{and}\ A\coef = b \big\} \subset \R^n.
\end{split}\end{align}
A \defemph{basic} ellipsotope, $\ellifn{p}{c,G}$, has no constraints or index set.
A \defemph{constrained} ellipsotope, $\ellifn{p}{c,G,A,b}$, has no index set.
An \defemph{indexed} ellipsotope, $\ellifn{p}{c,G,\idxset}$, has no constraints.
\end{defn}
\noindent One can go further and subject different indices of $\coef$ to different $p$-norms, but we have not yet needed this in practice.

An indexed ellipsotope can be seen as an affine map of
\begin{align}\label{eq:ball_product}
    \ballprodfn{\idxset} = \left\{\coef \in \R^\ngen\ |\ \norm{\coef\idx{J}}_p \leq 1~\forall~J \in \idxset \right\},
\end{align}
which we call a \defemph{ball product} because it is the Cartesian product of $|\idxset| \in \N$ $p$-norm balls in the dimensions indexed by each $J \in \idxset$.
\new{Note, a Cartesian product of unit balls is in general not a unit ball, which necessitates using index sets such that the $p$-norm is applied to an ellipsotope's coefficients correctly.}

\new{Ellipsotopes subsume zonotopes and ellipsoids as follows:
\begin{lem}\label{lem:zonotopes_and_ellipsoids_are_ellipsotopes}
Consider the ellipsotope $E = \ellifn{p}{c,G,A,b,\idxset} \subset \R^n$.
If $\idxset = \{\{1\},\{2\},\cdots,\{\ngen\}\}$ then $E$ is a zonotope.
If $p = 2$ and $\idxset = \{\{1,2,\cdots,\ngen\}\}$ (i.e., $\card{\idxset} = 1$) then $E$ is an ellipsoid.
\end{lem}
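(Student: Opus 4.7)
The plan is to verify each claim by substituting the specified choices of $\idxset$ (and $p$) directly into the ellipsotope definition \eqref{eq:elli_defn} and recognizing the resulting set. For the zonotope case, when $\idxset = \{\{1\},\{2\},\cdots,\{\ngen\}\}$, each $J \in \idxset$ is a singleton $\{i\}$, so $\coef\idx{J} = \coef_i$ is a scalar and its $p$-norm equals $\abs{\coef_i}$ for every $p \geq 1$. Imposing $\abs{\coef_i} \leq 1$ for all $i \in \N_\ngen$ is equivalent to $\norm{\coef}_\infty \leq 1$, so \eqref{eq:elli_defn} collapses to $\{c + G\coef \mid \norm{\coef}_\infty \leq 1\}$, which matches \eqref{eq:zono_defn}.

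For the ellipsoid case, setting $p = 2$ and $\idxset = \{\{1,2,\cdots,\ngen\}\}$ leaves a single constraint $\norm{\coef}_2 \leq 1$, so $E = \{c + G\coef \mid \norm{\coef}_2 \leq 1\}$ is the affine image of the 2-norm unit ball $\ballfn{2}{\ngen}$. Assuming $G \in \R^{\ndim \times \ngen}$ has full row rank, I would show that for any candidate $y \in \R^{\ndim}$, the minimum-2-norm preimage under $\coef \mapsto c + G\coef$ is $\coef^\ast = G\trans(GG\trans)\inv(y-c)$, which satisfies $\norm{\coef^\ast}_2^2 = (y-c)\trans (GG\trans)\inv (y-c)$. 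Thus $y \in E$ iff this quadratic form is at most $1$, which gives $E = \ellipsoidfn{c,(GG\trans)\inv}$ with shape matrix $Q = (GG\trans)\inv \succ 0$, matching \eqref{eq:ellipsoid}.

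The main subtlety concerns the linear constraint $A\coef = b$ and the rank of $G$. I read the lemma as implicitly applying to the \textit{basic} and \textit{indexed} forms of Definition \ref{def:ellipsotope} (with $A,b$ absent); otherwise the zonotope case would yield a constrained zonotope and the ellipsoid case a lower-dimensional section of an ellipsoid rather than the textbook objects of \eqref{eq:zono_defn}--\eqref{eq:ellipsoid}. If $G$ is rank-deficient, the resulting set is a degenerate ellipsoid and the shape-matrix argument must be reformulated in terms of the Moore--Penrose pseudoinverse $G\pinv$ on $\regtext{range}(G)$; this edge case does not affect the correspondence but warrants a brief remark in the final write-up.
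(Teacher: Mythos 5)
Your proposal is correct and follows essentially the same route as the paper: the zonotope case is the same direct comparison of \eqref{eq:elli_defn} with \eqref{eq:zono_defn}, and your minimum-norm-preimage argument for the ellipsoid case is exactly the computation the paper defers to Lem.~\ref{lem:ellipsotopes_are_ellipsoids} (your $(GG\trans)\inv$ coincides with the paper's $(G\pinv)\trans(G\pinv)$ when $G$ has full row rank, and your pseudoinverse remark covers the degenerate case). Your observation that the claim should be read for the basic/indexed forms (no $A\coef = b$) matches the implicit assumption in the paper's own proof.
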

\begin{proof}
The zonotope case follows from comparing \eqref{eq:elli_defn} to \eqref{eq:zono_defn}.
The ellipsoid case is proven later in Lem. \ref{lem:ellipsotopes_are_ellipsoids}.
\end{proof}}


\subsection{Operations on Ellipsotopes}\label{subsec:etope_ops}

Affine maps, Minkowski sums, intersections, and convex emptiness checking are the key operations that make constrained zonotopes and similar set representations useful for tasks such as reachability analysis and fault detection.
We now show that ellipsotopes are closed under these operations. 
We then provide a convex program to check whether or not an ellipsotope is empty or contains a point.
These operations are useful for collision checking an ellipsotope reachable set or detecting faults, as we show in Sec. \ref{sec:numerical_examples}.

\subsubsection{Affine Map}
The affine map of ellipsotopes follows from the definition (c.f., \cite[Prop. 1]{scott2016constrained_zonotopes}).
\new{Let $E = \ellifn{p}{c,G,A,b,\idxset}$ with $c \in \R^\ndim$ and $G \in \R^{\ndim\times\ngen}$.}
Consider an affine map parameterized by a matrix $T \in \R^{\ndim\times \ndim}$ and a translation vector $t \in \R^\ndim$.
Then
\begin{align}
    TE + t = \ellifn{p}{Tc + t,TG,A,b,\idxset}.
\end{align}

\subsubsection{Minkowski Sum}
For the ellipsotope Minkowski sum, we use index sets to apply the $p$-norm separately to the coefficients from each ellipsotope, and matrix concatenation to preserve the linear constraints from the input ellipsotopes:

\begin{prop}[Minkowski Sum]\label{prop:minkowski_sum}
Consider the ellipsotopes $E_1 = \ellifn{p}{c_1,G_1,A_1,b_1,\idxset_1}$ and $E_2 = \ellifn{p}{c_2,G_2,A_2,b_2,\idxset_2}$, both in $\R^\ndim$, with $m_1$ and $m_2 \in \N$ generators respectively.
Then the Minkowski sum $E_1 \oplus E_2 = E_\oplus$ is given by
\begin{subequations}\label{eq:mink_sum_exact}
\begin{align}
    E_\oplus &= \ellifn{p}{c_1 + c_2, [G_1, G_2], A_\oplus, b_\oplus, \idxset_\oplus},\ \regtext{with} \\
    A_\oplus &=  \diag{A_1,A_2},\quad 
    b_\oplus = \begin{bmatrix} b_1 \\ b_2 \end{bmatrix},\ \regtext{and}\\
    \idxset_\oplus &= \idxset_1\cup(\idxset_2 + \ngen_1),
\end{align}
\end{subequations}
where $\ngen_1$ is the number of generators of $E_1$ and $(\idxset_2 + \ngen_1)$ is as per Sec. \ref{subsec:notation}.
\newnc{This operation has complexity $\bigO(\ndim+\ngen_2)$.}
\end{prop}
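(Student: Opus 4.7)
The plan is a direct set equality argument by unpacking Definition~\ref{def:ellipsotope} on both sides and exhibiting an explicit bijection between representable coefficient vectors. Concretely, I would begin with the $\subseteq$ direction: take an arbitrary point $x \in E_1 \oplus E_2$, so that $x = x_1 + x_2$ where $x_i = c_i + G_i \coef_i$, $A_i \coef_i = b_i$, and $\|\coef_i\idx{J}\|_p \leq 1$ for every $J \in \idxset_i$. Then form the concatenated coefficient $\coef = (\coef_1, \coef_2) \in \R^{\ngen_1 + \ngen_2}$ and show it satisfies the defining conditions of $E_\oplus$.

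The key bookkeeping is to verify that: (i) $(c_1+c_2) + [G_1, G_2]\coef = x$, which is just block matrix multiplication; (ii) $A_\oplus \coef = b_\oplus$, which follows from the block-diagonal structure of $A_\oplus$ giving $A_\oplus \coef = (A_1\coef_1, A_2\coef_2) = (b_1, b_2)$; and (iii) the $p$-norm constraints indexed by $\idxset_\oplus$ are satisfied. For (iii), I would invoke the shift convention from Sec.~\ref{subsec:notation}: the multi-indices in $\idxset_1$ pick out $\coef_1 = \coef\idx{1{:}\ngen_1}$ and the shifted multi-indices in $\idxset_2 + \ngen_1$ pick out $\coef_2 = \coef\idx{\ngen_1+1{:}\ngen_1+\ngen_2}$, so the norm bounds carry over unchanged. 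I would also flag that $\idxset_\oplus$ is a valid index set (i.e., a partition of $\N_{\ngen_1+\ngen_2}$), which follows since $\idxset_1$ partitions $\N_{\ngen_1}$ and $\idxset_2 + \ngen_1$ partitions $\{\ngen_1+1,\ldots,\ngen_1+\ngen_2\}$ disjointly.

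For the $\supseteq$ direction, I would reverse the construction: take $x \in E_\oplus$ with witness $\coef \in \R^{\ngen_1+\ngen_2}$, split it as $\coef_1 = \coef\idx{1{:}\ngen_1}$ and $\coef_2 = \coef\idx{\ngen_1+1{:}\ngen_1+\ngen_2}$, and read off from the block-diagonal equality constraint that $A_i \coef_i = b_i$. The disjointness property of the partition $\idxset_\oplus$ ensures the $p$-norm constraints decouple cleanly between $\coef_1$ and $\coef_2$, so that $x_i = c_i + G_i \coef_i \in E_i$ and $x = x_1 + x_2 \in E_1 \oplus E_2$.

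The complexity claim $\bigO(\ndim + \ngen_2)$ I would justify by observing that no arithmetic in $\R$ is required for the index-set shift (just an offset added to integer labels, which can be stored lazily or as a constant-time reindexing), that $A_\oplus$ and $b_\oplus$ require only pointer-level block assembly, and that $c_1 + c_2$ is $\bigO(\ndim)$ while $[G_1, G_2]$ is an $\bigO(\ndim \cdot \ngen_2)$-sized concatenation whose amortized cost the authors appear to treat column-wise as $\bigO(\ngen_2)$ additional storage; I would note this assumption explicitly. I do not expect any genuine obstacle — the whole argument is essentially definitional — but the one spot that deserves care is verifying that $\idxset_\oplus$ is a valid partition and that the partition structure forces the $p$-norm conditions to decouple, since this is what distinguishes ellipsotope Minkowski sum from a naive concatenation that would incorrectly apply a single joint $p$-norm across both blocks.
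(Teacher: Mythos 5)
Your proposal is correct and follows essentially the same route as the paper's proof: unroll the Minkowski sum via the definitions, concatenate the coefficient vectors as $\coef = (\coef_1,\coef_2)$, and observe that the block-diagonal constraints and the shifted index set make the $p$-norm conditions decouple exactly as in the input ellipsotopes. You spell out both inclusions and the validity of the partition more explicitly than the paper does, but there is no substantive difference in the argument or in the complexity accounting.
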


\begin{proof}
By applying the definitions of Minkowski sums and ellipsotopes, we have
\begin{subequations}\label{eq:mink_sum_unroll}
\begin{align}
    E_1 \oplus E_2 &= \{x_1 + x_2\ |\ x_1 \in E_1,\ x_2 \in E_2 \} \\
                \begin{split}
                   &= \{c_1 + G_1 \coef_1 + c_2 + G_2 \coef_2\ |\ \norm{\coef_1\idx{J}}_p \leq 1\  \\ 
                   & \quad\quad\quad \forall\ J \in \idxset_1,\ A_1\coef_1=b_1,\ \norm{\coef_2\idx{J}}_p \leq 1,\ \\ 
                   & \quad\quad\quad \forall\ J \in \idxset_2,\ \regtext{and}\ A_2\coef_2=b_2 \}.
               \end{split}
\end{align}
\end{subequations}
Then, the proof is complete by taking $\coef = (\coef_1,\coef_2)$, expanding \eqref{eq:mink_sum_exact} using Def. \ref{def:ellipsotope}, and comparing to \eqref{eq:mink_sum_unroll}.
Notice that $\idxset$ ensures the $p$-norm constraint is applied to each subset of the coefficients of $E_\oplus$ corresponding to $E_1$ and $E_2$.
\newnc{Also notice that $c_1 + c_2$ is $\bigO(\ndim)$ and $\idxset_2 + m_1$ is $\bigO(m_2)$; all other operations are $\bigO(1)$ memory allocations.}
\end{proof}

\noindent The Minkowski sum is illustrated in Fig. \ref{fig:minkowski_sum_and_intersection} (shown in beige).

\begin{figure}[ht]
    \centering
    \includegraphics[width=0.95\columnwidth]{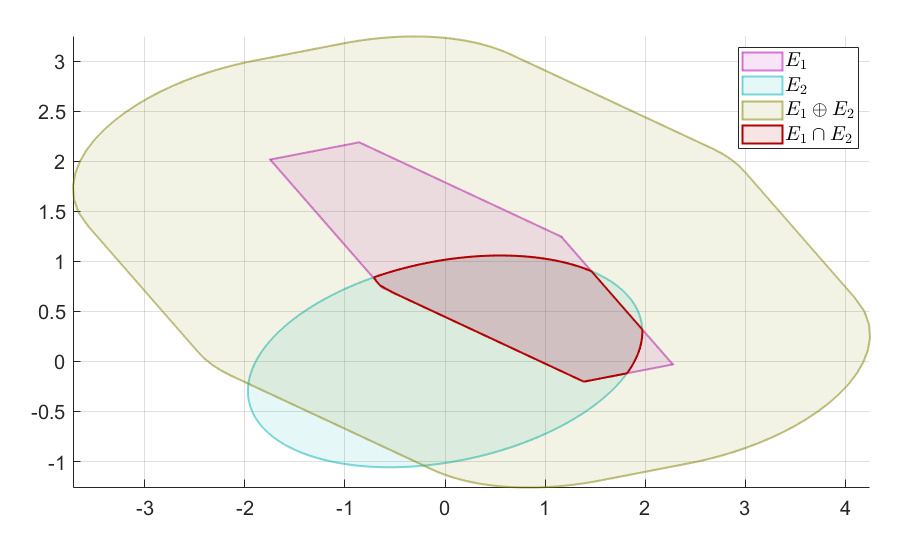}
    \caption{Example of Minkowski sum using Prop. \ref{prop:minkowski_sum} (shown in tan) and intersection using Prop. \ref{prop:intersection} (shown in red) of two ellipsotopes (shown in purple and blue).}
    \label{fig:minkowski_sum_and_intersection}
    \vspace*{-0.3cm}
\end{figure}


\subsubsection{Cartesian Product}

The Cartesian product is useful for tasks such as combining reachable sets in lower dimensions to create a single, higher-dimensional reachable set for a complex system \cite{kousik2019_quad_RTD,holmes2020reachable}.
To define this operation for ellipsotopes, let $E_1 = \ellifn{p}{c_1,G_1,A_1,b_1,\idxset_1} \subset \R^{n_1}$ and $E_2 = \ellifn{p}{c_2,G_2,A_2,b_2,\idxset_2} \subset \R^{n_2}$.
By applying similar logic to the proof of Prop. \ref{prop:minkowski_sum}, one can check that
\begin{subequations}
\begin{align}
    E_1 \times E_2 &= \ellifn{p}{c_\times,G_\times,A_\times,b_\times,\idxset_\times} \subset \R^{(n_1 + n_2)},\ \regtext{with}\\
    c_\times &= \begin{bmatrix}
        c_1 \\ c_2
    \end{bmatrix},\quad
    G_\times = \diag{G_1,G_2},\\
    A_\times &= \diag{A_1,A_2},\quad b_\times = \begin{bmatrix}
        b_1 \\ b_2
    \end{bmatrix},\ \regtext{and}\\
    \idxset_\times &= \idxset_1\cup(\idxset_2 + \ngen_1),
\end{align}
\end{subequations}
where $\ngen_1$ is the number of generators of $E_1$.
\newnc{The complexity is $\bigO(1)$ as it only consists of memory allocations.}

\subsubsection{Intersections}

By extending the constrained zonotope intersection property in \cite{scott2016constrained_zonotopes}, we define the intersection of ellipsotopes.
Note, the intersection may be empty, which one can check using Prop. \ref{prop:empty_and_point_check} below.

\begin{prop}[Ellipsotope-Ellipsotope Intersection]\label{prop:intersection}
Let $E_1 = \ellifn{p}{c_1,G_1,A_1,b_1,\idxset_1} \subset \R^\ndim$ with $\ngen_1$ generators and $A_1\in\R^{\ncon_1\times \ngen_1}$, and let $E_2 = \ellifn{p}{c_2,G_2,A_2,b_2,\idxset_2} \subset \R^\ndim$ with $\ngen_2$ generators and $A_2\in\R^{\ncon_2\times \ngen_2}$.
Then the intersection $E_1 \cap E_2$ is an ellipsotope $E_\cap$ given by
\begin{subequations}
\begin{align}
    E_\cap &= \ellifn{p}{c_1,[G_1,\zeros_{\ndim\times\ngen_2}],A_\cap,b_\cap,\idxset_\cap},\label{eq:etope_intersect_defn}\\
    A_\cap &= \begin{bmatrix}
                A_1 & \zeros_{\ncon_1\times\ngen_2} \\
                \zeros_{\ncon_2\times\ngen_1} & A_2 \\
                G_1 & -G_2
              \end{bmatrix},\ 
    b_\cap = \begin{bmatrix}
                b_1 \\
                b_2 \\
                c_2 - c_1
              \end{bmatrix},\quad\regtext{and} \\
    \idxset_\cap &= \idxset_1\cup(\idxset_2+m_1).
\end{align}
\end{subequations}
\newnc{This operation has complexity $\bigO(\ndim+m_2)$.}
\end{prop}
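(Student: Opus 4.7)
The plan is to mirror the proof strategy for Prop. \ref{prop:minkowski_sum} and the constrained zonotope intersection of \cite[Prop. 1]{scott2016constrained_zonotopes}: unroll both the definition of set intersection and Def. \ref{def:ellipsotope}, then re-pack the result into the ellipsotope form.

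First, I would write $x \in E_1 \cap E_2$ if and only if there exist coefficient vectors $\coef_1 \in \R^{\ngen_1}$ and $\coef_2 \in \R^{\ngen_2}$ with $x = c_1 + G_1\coef_1 = c_2 + G_2\coef_2$, such that $\norm{\coef_1\idx{J}}_p \le 1$ for every $J \in \idxset_1$, $\norm{\coef_2\idx{J}}_p \le 1$ for every $J \in \idxset_2$, $A_1\coef_1 = b_1$, and $A_2\coef_2 = b_2$. Next, I would concatenate $\coef = (\coef_1,\coef_2) \in \R^{\ngen_1+\ngen_2}$ and rewrite the matching condition $c_1 + G_1\coef_1 = c_2 + G_2\coef_2$ as the linear equation $G_1\coef_1 - G_2\coef_2 = c_2 - c_1$. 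This equation, stacked block-diagonally with $A_1\coef_1 = b_1$ and $A_2\coef_2 = b_2$, yields exactly the system $A_\cap \coef = b_\cap$ from the statement.

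Then I would verify that the index set $\idxset_\cap = \idxset_1 \cup (\idxset_2 + \ngen_1)$ imposes the correct norm constraints on the stacked coefficients: the first $\ngen_1$ entries of $\coef$ correspond to $\coef_1$ and are handled by $\idxset_1$, while the shift by $\ngen_1$ ensures that the multi-indices in $\idxset_2$ now correctly pick out the entries of $\coef$ that correspond to $\coef_2$. Combined with writing $x = c_1 + [G_1, \zeros_{\ndim\times\ngen_2}]\coef$ (since only $\coef_1$ contributes to $x$ under the $c_1$-parameterization), we recover \eqref{eq:etope_intersect_defn} by direct comparison to Def. \ref{def:ellipsotope}. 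The converse inclusion is immediate: any $\coef$ feasible for $E_\cap$ furnishes, via its two sub-blocks, feasible $\coef_1,\coef_2$ certifying membership in both $E_1$ and $E_2$.

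For the complexity claim, I would note that all block assembly steps for $A_\cap$, $b_\cap$, and the zero-padded generator matrix are $\bigO(1)$ memory allocations, while the two non-trivial operations are computing $c_2 - c_1 \in \R^\ndim$, which is $\bigO(\ndim)$, and shifting $\idxset_2 + \ngen_1$, which is $\bigO(\ngen_2)$. The main (and only) subtlety in the argument is the index-set bookkeeping to ensure the $p$-norm constraints decouple across the $E_1$ and $E_2$ blocks exactly as they did in the input; the rest is a direct translation of the constrained zonotope proof to the ellipsotope setting.
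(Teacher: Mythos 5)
Your proposal is correct and follows essentially the same route as the paper: the paper's proof simply cites the constrained zonotope intersection result of Scott et al. and notes that the index set $\idxset_\cap$ keeps the $p$-norm constraints decoupled between the $E_1$ and $E_2$ coefficient blocks, with the identical complexity accounting ($c_2 - c_1$ costs $\bigO(\ndim)$ and shifting $\idxset_2$ costs $\bigO(\ngen_2)$). You have merely written out explicitly the unrolling that the paper delegates to the cited reference, so there is nothing to flag.
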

\begin{proof}
This follows from \cite[Prop. 1]{scott2016constrained_zonotopes} by noticing (similar to the proof of Prop. \ref{prop:minkowski_sum}) that $\idxset$ ensures that the $p$-norm constraints are applied separately to the coefficients of $E_\cap$ depending on whether they came from $E_1$ or from $E_2$.
\newnc{Notice that $c_2 - c_1$ is $\bigO(\ndim)$ and $\idxset_2 + m_1$ is $\bigO(\ngen_2)$, hence the overall complexity $\bigO(\ndim + \ngen_2)$.}
\end{proof}

\noindent This property is illustrated in Fig. \ref{fig:minkowski_sum_and_intersection} (shown in red).
Note that, since $E_1 \cap E_2 = E_2 \cap E_1$, one can choose which center to keep in \eqref{eq:etope_intersect_defn} to minimize the number of zero generators (that is, one can either add $\ngen_1$ or $\ngen_2$ generators).

Often, for reachability of hybrid systems, one must detect when a reachable set intersects a hyperplane or halfspace \cite{girard2005reachability_zono,raghuraman2020set_ops_conzono}.
We now present both of these cases for ellipsotopes.
Note, we do not assume either intersection is nonempty.

\newnc{
\begin{prop}[Ellipsotope-Hyperplane Intersection]\label{prop:hyperplane_intersection}
Consider $E = \ellifn{p}{c,G,A,b,\idxset} \subset \R^n$ with $c \in \R^n$, $G \in \R^{n\times\ngen}$, $A \in \R^{\ncon\times\ngen}$, $b \in \R^\ncon$, and $\idxset$ a valid index set.
Let $P = \hpfn{H,f} \subset \R^n$ be an affine hyperplane with $H \in \R^{k\times n}$.
Then $E \cap P = E_P$ where
\begin{subequations}
\begin{align}
    E_P &= \ellifn{p}{c, G, A_P, b_P, \idxset},\ \regtext{with}\\
    A_P &= \begin{bmatrix}
        A \\ HG
    \end{bmatrix}\quad\regtext{and}\quad
    b_P = \begin{bmatrix}
        b \\ f - Hc
    \end{bmatrix}.
\end{align}
\end{subequations}
This has complexity $\bigO(kn\ngen)$.
\end{prop}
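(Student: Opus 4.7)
The plan is to unroll the two set definitions, substitute the ellipsotope parameterization of a point into the hyperplane equation, and recognize the result as the linear constraint that gets appended to $A$. This mirrors how intersections with generic ellipsotopes were handled in Prop.~\ref{prop:intersection}, but is substantially simpler because a hyperplane imposes only equality constraints on $x$, with no generator coefficients of its own to track, so no new columns of $G$, no new index subsets, and no ``lifting'' via a difference-of-centers row are required.

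Concretely, I would start from $x \in E \cap P$, which by Def.~\ref{def:ellipsotope} and the definition of $\hpfn{H,f}$ means there exists $\coef \in \R^\ngen$ with $x = c + G\coef$, $A\coef = b$, $\norm{\coef\idx{J}}_p \leq 1$ for all $J \in \idxset$, and additionally $Hx = f$. Substituting $x = c + G\coef$ into $Hx = f$ yields $HG\coef = f - Hc$, which is a purely linear constraint in the same coefficient vector $\coef$. Stacking this with $A\coef = b$ gives exactly the block constraint $A_P \coef = b_P$ in the statement, while $c$, $G$, and $\idxset$ are inherited unchanged from $E$. Matching back against Def.~\ref{def:ellipsotope} then shows $E \cap P = E_P$. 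The converse inclusion is immediate since any $\coef$ feasible for $E_P$ satisfies both the original ellipsotope constraints and $HG\coef = f - Hc$, the latter of which is equivalent to $H(c+G\coef) = f$, i.e., $x \in P$.

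For the complexity claim, I would note that $c$, $G$, $A$, $b$, and $\idxset$ are reused without copying, so the only nontrivial work is forming the new constraint block: computing $HG$ costs $\bigO(kn\ngen)$, and computing $Hc$ costs $\bigO(kn)$, which is dominated by the former; the vertical concatenations producing $A_P$ and $b_P$ are $\bigO(k\ngen)$ and $\bigO(k)$ memory operations respectively, again dominated by $\bigO(kn\ngen)$.

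I do not anticipate any real obstacle. The only subtle point worth stating explicitly in the write-up is that the index set $\idxset$ and the $p$-norm constraints are completely unaffected by the intersection, because no new generator coefficients are introduced; this is the structural reason a hyperplane intersection is cheaper than a general ellipsotope-ellipsotope intersection and is worth calling out to contrast with Prop.~\ref{prop:intersection}.
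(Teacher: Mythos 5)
Your proposal is correct and follows essentially the same route as the paper's proof: substitute $x = c + G\coef$ into $Hx = f$ to obtain $HG\coef = f - Hc$, append this as a new block row of the linear constraints, and note that the complexity is dominated by the product $HG$. Your write-up is somewhat more explicit about the converse inclusion and the unchanged index set, but there is no substantive difference in approach.
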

\begin{proof}
Recall that $P = \left\{x \in \R^n\ |\ Hx = f \right\}$.
Then, if $x \in E \cap P$, there exists $\coef \in \R^\ngen$ feasible for $E$, for which
\begin{align}
    H(c + G\coef) = f \implies HG\coef = f - Hc,
\end{align}
which is the last block row of the linear constraint in $E_P$.
The complexity follows from the product $HG$.
\end{proof}
}

For the halfspace case, we adapt \cite[Theorem 1]{raghuraman2020set_ops_conzono}.

\begin{prop}[Ellipsotope-Halfspace Intersection]\label{prop:halfspace_intersection}
\new{Consider $E = \ellifn{p}{c,G,A,b,\idxset} \subset \R^n$ with $c \in \R^n$, $G \in \R^{n\times\ngen}$, $A \in \R^{\ncon\times\ngen}$, $b \in \R^\ncon$, and $\idxset$ a valid index set.
Let $S = \hspacefn{h,s} \subset \R^n$, where $h \in \R^n$ and $s \in \R$.
Then $E_S = E \cap S$ is given by
\begin{subequations}\label{eq:elli_halfplane_intersection}
\begin{align}
    E_S &= \ellifn{p}{c,[G,\ \zeros_{n\times 1}],A_S,b_S,\idxset_S}, \\
    A_S &= \begin{bmatrix}
        A & \zeros_{\ncon\times 1} \\
        h\trans G &d
    \end{bmatrix},\ 
    b_S = \begin{bmatrix}
        b \\ s - h\trans c - d
    \end{bmatrix},\\
    d &= \tfrac{1}{2}\left(s - h\trans c + \abs{h\trans G}\ones_{\ngen\times 1}\right),\ \regtext{and}\\
    \idxset_S &= \idxset\cup \{\{\ngen + 1\}\},
\end{align}
\end{subequations}
where $\abs{\cdot} \in \R^{n\times\ngen}$ is the element-wise absolute value and $\{\{\cdot\}\}$ is a singleton index set.
This has complexity $\bigO(n^2(n+m))$.}
\end{prop}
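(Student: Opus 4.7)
The plan is to establish the set equality $E \cap S = E_S$ by viewing the extra generator coefficient as a nonnegative slack variable that converts the halfspace inequality into a linear equality. First I would note that a point $x \in E \cap S$ admits a representation $x = c + G\coef$ with $\coef$ feasible for $E$ and satisfying the scalar inequality $h\trans G\coef \leq s - h\trans c$. Introducing a slack $e = s - h\trans c - h\trans G\coef \geq 0$ and parameterizing it as $e = d(1+\coef_{\ngen+1})$ with $\coef_{\ngen+1} \in [-1,1]$ reproduces exactly the augmented row $h\trans G\coef + d\coef_{\ngen+1} = s - h\trans c - d$ together with the singleton block $\{\ngen+1\}$ appearing in $\idxset_S$, which matches the structure of $A_S$, $b_S$, and $\idxset_S$. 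The zero column appended to $G$ ensures that the slack coefficient does not contribute to the geometric representation of $x$.

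I would then prove two containments. For $E_S \subseteq E \cap S$: a feasible tuple $(\coef,\coef_{\ngen+1})$ for $E_S$ projects onto a feasible $\coef$ for $E$ since $A_S$, $b_S$, and $\idxset_S$ contain $A$, $b$, and $\idxset$ as a sub-block and the new singleton $\{\ngen+1\}$ touches only the last coordinate. The final row of $A_S\coef = b_S$ reads $s - h\trans c - h\trans G\coef = d(1 + \coef_{\ngen+1})$, which is nonnegative provided $d \geq 0$, yielding $h\trans(c + G\coef) \leq s$. For $E \cap S \subseteq E_S$: given $\coef$ feasible for $E$ with $h\trans(c + G\coef) \leq s$, I would solve the augmented equation for $\coef_{\ngen+1} = \bigl(s - h\trans c - d - h\trans G\coef\bigr)/d$ and verify $\abs{\coef_{\ngen+1}} \leq 1$. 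The bound $\coef_{\ngen+1} \geq -1$ is immediate from the halfspace constraint, while the bound $\coef_{\ngen+1} \leq 1$ requires $s - h\trans c - h\trans G\coef \leq 2d$ and is what pins down the specific formula for $d$.

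The main obstacle, and the key observation, is justifying the formula for $d$ in a way that accommodates arbitrary $p$-norm index groups rather than only the infinity-norm coefficients one would have for a plain zonotope. The point is that for any $p \geq 1$ the unit $p$-norm ball is contained in the unit infinity-norm ball, so $\norm{\coef\idx{J}}_p \leq 1$ for every $J \in \idxset$ implies $\norm{\coef}_\infty \leq 1$, which in turn gives the coordinate-wise bound $h\trans G\coef \geq -\abs{h\trans G}\ones_{\ngen\times 1}$. Setting $d = \tfrac{1}{2}\bigl(s - h\trans c + \abs{h\trans G}\ones_{\ngen\times 1}\bigr)$ then produces $2d \geq s - h\trans c - h\trans G\coef$ for every feasible $\coef$, closing the upper bound and simultaneously ensuring $d \geq 0$ whenever $E \cap S$ is nonempty. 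I would also briefly note the edge case in which this formula gives $d < 0$: here the halfspace is strictly separated from $E$, so $E \cap S$ is empty, and a short check shows the augmented system has no feasible slack either, keeping the equality consistent. The complexity bound is then a straightforward tally of forming $h\trans c$ and $h\trans G$, assembling $\abs{h\trans G}\ones_{\ngen\times 1}$ and the block-augmented $A_S$ and $b_S$, which produces the stated $\bigO(n^2(n+\ngen))$ cost.
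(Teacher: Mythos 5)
Your proposal is correct and follows essentially the same route as the paper's proof: introduce a nonnegative slack converting $h\trans(c+G\coef)\leq s$ into an equality, bound the slack above using the fact that every feasible $\coef$ satisfies $\norm{\coef}_\infty\leq 1$ (the paper phrases this as $E\subset\zonofn{c,G}$ and cites Girard, you derive it from the containment of the $p$-ball in the $\infty$-ball — the same fact), and affinely map $[-1,1]$ onto $[0,2d]$ to obtain the stated $d$. Your version is somewhat more complete than the paper's, since you verify both containments explicitly and address the degenerate case $d<0$, which the paper's constructive argument leaves implicit.
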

\begin{proof}
We prove this property constructively.
Recall that $S = \left\{x \in \R^n\ |\ h\trans x \leq s \right\}$.
Following the logic of Prop. \ref{prop:hyperplane_intersection}, our strategy is to add a linear constraint to the coefficients $\coef$ of $E$ constraining the resulting set to lie within the halfspace; that is, we want $h\trans(c + G\coef) \leq s$.
However, we need a slack variable to enforce this as an equality constraint:
    $h\trans(c + G\coef) + \gamma = s$,
with $\gamma \geq 0$.
We cannot add $\gamma$ directly as a coefficient to the ellipsotope, because it is unconstrained; instead, we want to bound $\gamma$ to lie within an interval, which we can map to the interval $[-1,1]$ containing a (scalar) ellipsotope coefficient.
To do this, we first find an upper bound for $\gamma$ using the fact that $E$ is compact and lies fully within a zonotope, $E \subset \zonofn{c,G}$ (see Lem. \ref{lem:generator_pop_zono_overapprox} below).
From \cite[Sec. 5.1]{girard2005reachability_zono}, we have $\gamma \leq s - h\trans c + h\trans\abs{G}\ones_{\ngen\times 1}$.
Now, we want to pick $d$ such that, for any $\gamma$, $d(\slackvar+1) = \gamma$ and $\norm{\slackvar}_p \leq 1$, where $\slackvar$ is our additional coefficient.
That is, we seek an affine transformation of the interval $[-1,1]$ to $[0,s - h\trans c + h\trans\abs{G}\ones_{\ngen\times 1})]$.
Applying interval arithmetic, we can solve $d([-1,1] + 1) = [0,\ s - h\trans c + h\trans\abs{G}\ones_{\ngen\times 1}]$ for 
\begin{align}
    d = \tfrac{1}{2}(s - h\trans c + h\trans\abs{G}\ones_{\ngen\times 1})
\end{align}
We can then construct the necessary linear equality constraint on $\coef$ and $\slackvar$ as $h\trans(c + G\coef) + d(\slackvar + 1) = s$, so
\begin{align}
    h\trans G\coef + d\slackvar = s - h\trans c - d\label{eq:halfplane_eq_cons}
\end{align}
Notice that $E_S$, as in \eqref{eq:elli_halfplane_intersection}, is the ellipsotope $E$ with one additional coefficient and the additional linear constraint in \eqref{eq:halfplane_eq_cons}, with $\idxset_S$ ensuring that $\slackvar \in [-1,1]$.
\newnc{Finally, notice that $|h\trans G|\ones_{\ngen\times 1}$ dominates the operation complexity.}
\end{proof}

\noindent To build intuition for when $E \cap S$ is empty, consider the zonotope $Z = \zonofn{c,G}$.
Notice that, if $Z \cap S = \emptyset$, then $E \cap S = \emptyset$.
In the case of the zonotope, we can interpret this to mean that the affine subspace $\hpfn{A_S,b_S} \subset \R^{(\ngen + 1)}$ does not intersect the $\infty$-norm unit ball in $\R^{(\ngen+1)}$.
Similarly for ellipsotopes, $\hpfn{A_S,b_S}$ does not intersect the ball product $\ballprodfn{\idxset_S} \subset \R^{(\ngen+1)}$.

\subsection{Emptiness and Point Containment}

Given a system's state, it is often useful to check if it lies within a specific region of state space.
Similarly, given a reachable set in state space, one may need to check if this set intersects with, e.g., an unsafe set.
Assuming ellipsotope representation of the states and sets in question, we perform the desired checks as follows, by leveraging Prop. \ref{prop:intersection} wherein the intersection of ellipsotopes is again an ellipsotope.

\begin{prop}[Emptiness and Point Containment]\label{prop:empty_and_point_check}
Consider an ellipsotope $E = \ellifn{p}{c,G,A,b,\idxset} \subset \R^n$ with $\ngen$ generators.
Assume $\hpfn{A,b} \neq \emptyset$.
Let $x \in \R^n$, and let
\begin{align}
    \costfunc{\coef} = \max_{J \in \idxset} \norm{\coef\idx{J}}_p,
\end{align}
where $\coef$ is the ellipsotope coefficient vector.
Then
\begin{align}
    E \neq \emptyset &\iff \min_{\coef \in \R^\ngen}\left\{\costfunc{\coef}\ |\ A\coef = b \right\} \leq 1\ \regtext{and} \label{prog:empty_check}\\
    x \in E &\iff \min_{\coef\in\R^\ngen}\left\{\costfunc{\coef}\ |\ 
        \begin{bmatrix} A \\ G \end{bmatrix}\coef =
        \begin{bmatrix} b \\ x - c \end{bmatrix}
    \right\} \leq 1,\label{prog:point_containment}
\end{align}
which are both convex programs.
\end{prop}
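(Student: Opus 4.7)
The plan is to reduce each claim to the simple observation that the constraint $\|\coef\idx{J}\|_p \leq 1~\forall J \in \idxset$ appearing in Def.~\ref{def:ellipsotope} is equivalent to the single scalar inequality $\max_{J \in \idxset}\|\coef\idx{J}\|_p \leq 1$, i.e.\ $\costfunc{\coef} \leq 1$. Once that is in hand, both equivalences become statements about feasibility of a system that differs from the ellipsotope defining conditions only in whether we have rewritten the per-block norm bounds as a single maximum.

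For the first equivalence, I would argue as follows. Suppose $E \neq \emptyset$; then by Def.~\ref{def:ellipsotope} there exists $\coef^\star \in \R^\ngen$ with $A\coef^\star = b$ and $\|\coef^\star\idx{J}\|_p \leq 1$ for every $J \in \idxset$, so $\costfunc{\coef^\star} \leq 1$, which shows the minimum in \eqref{prog:empty_check} is at most $1$. Conversely, if the minimum is attained (or approached) at some feasible $\coef^\star$ with $\costfunc{\coef^\star}\leq 1$, then $\|\coef^\star\idx{J}\|_p \leq \costfunc{\coef^\star} \leq 1$ for every $J$, and hence $c + G\coef^\star \in E$. The feasibility of the equality $A\coef = b$ in isolation is guaranteed by the assumption $\hpfn{A,b} \neq \emptyset$, so the program is well-posed. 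The second equivalence follows identically after observing that $x \in E$ iff there is a $\coef$ with $A\coef = b$, $G\coef = x - c$, and $\|\coef\idx{J}\|_p \leq 1$ for every $J \in \idxset$; stacking the two linear conditions into the augmented matrix and right-hand side in \eqref{prog:point_containment} and re-running the same max-versus-per-block argument completes it.

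For convexity, I will note that for $p \geq 1$ each map $\coef \mapsto \|\coef\idx{J}\|_p$ is a convex function of $\coef$ (composition of a norm with the linear selection $\coef \mapsto \coef\idx{J}$), and the pointwise maximum of finitely many convex functions is convex, so $\costfunc{\cdot}$ is convex. The constraint sets in \eqref{prog:empty_check} and \eqref{prog:point_containment} are affine, hence convex, so in both cases we are minimizing a convex function over a convex set.

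The only subtle point I anticipate is making sure the minimum is actually attained (rather than merely an infimum) so that the ``$\leq 1$'' phrasing is justified on both sides. This follows because $\costfunc{\cdot}$ is continuous and coercive (the max of $p$-norms grows without bound as $\|\coef\|\to\infty$), and the feasible set is a nonempty closed affine subspace, so a minimizer exists by the classical Weierstrass argument. I would make this explicit in one sentence and then the two equivalences drop out as described above.
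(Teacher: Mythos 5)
Your proof is correct and follows essentially the same route as the paper: both arguments observe that the program in \eqref{prog:empty_check} is exactly a feasibility test for the ellipsotope's defining constraints (i.e., whether $\hpfn{A,b}$ meets the ball product), and both obtain convexity from the fact that $\costfunc{\cdot}$ is a pointwise maximum of norms minimized over an affine set. The only cosmetic differences are that the paper derives \eqref{prog:point_containment} by invoking Prop.~\ref{prop:intersection} to check emptiness of $E\cap\ellifn{p}{x,\emptyarr}$ --- which yields the same stacked linear system you write down directly --- and that your explicit coercivity/Weierstrass argument for attainment of the minimum is a useful detail the paper leaves implicit.
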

\begin{proof}
We prove the claim for \eqref{prog:empty_check}, as the claim for \eqref{prog:point_containment} then follows from Prop. \ref{prop:intersection} by checking the emptiness of $E\cap\ellifn{p}{x,\emptyarr}$.
Notice that, if $\coef \in \R^\ngen$ is feasible for the ellipsotope definition constraints in \eqref{def:ellipsotope}, then $\costfunc{\coef} \leq 1$ by construction.
Therefore, \eqref{prog:empty_check} evaluates whether or not the set $\hpfn{A,b}$ intersects $\ballprodfn{\idxset}$ (i.e., the set of feasible $\coef$ as in \eqref{eq:ball_product}).
The constraint set is nonempty by assumption and convex by inspection.
Since $\norm{\cdot}_p$ is convex, and the $\max$ of convex functions is also convex, $\costfunc{\cdot}$ is convex.
\end{proof}

\noindent \newnc{Note, the complexity of solving \eqref{prog:empty_check} or \eqref{prog:point_containment} depends on $p$ and one's choice of numerical solver.}
We find in practice that, when an ellipsotope is nonempty with $p = 2$, it takes on the order of $10^{-4}$ s to solve \eqref{prog:empty_check} (see Section \ref{sec:numerical_examples} for implementation details), but it takes two to four orders of magnitude longer for empty ellipsotopes.
However, by instead searching for a feasible $\coef$ to the constraints $A\coef = b$ and $\beta \in \ballprodfn{\idxset}$, we achieve much lower solve times in practice.
We write the search for a feasible $\coef$ as follows:

\begin{cor}[to Prop. \ref{prop:empty_and_point_check}]\label{cor:emptiness_check}
Let $E = \ellifn{p}{c,G,A,b,\idxset} \subset \R^n$ with $\ngen$ generators.
Assume $\hpfn{A,b} \neq \emptyset$.
Then
\begin{align}\label{prog:empty_check_feas_prob}
    E \neq \emptyset \iff \min_{\coef \in \R^\ngen}\left\{ \norm{A\coef - b}_2^2\ |\ \coef \in \ballprodfn{\idxset} \right\} = 0.
\end{align}
\end{cor}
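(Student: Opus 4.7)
The plan is to prove both directions of the equivalence by unwinding definitions. The main idea is that by Def.\ \ref{def:ellipsotope}, $E \neq \emptyset$ if and only if there exists a coefficient vector $\coef \in \R^\ngen$ simultaneously satisfying $A\coef = b$ and $\coef \in \ballprodfn{\idxset}$. The program in \eqref{prog:empty_check_feas_prob} is precisely a feasibility program for this conjunction of conditions, rewritten by moving the linear constraint into a nonnegative objective while keeping the ball-product constraint in the feasible set.

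For the forward direction ($\Rightarrow$), assume $E \neq \emptyset$, so there exists $\coef^*$ with $A\coef^* = b$ and $\norm{\coef^*\idx{J}}_p \leq 1$ for every $J \in \idxset$, i.e., $\coef^* \in \ballprodfn{\idxset}$ by \eqref{eq:ball_product}. This $\coef^*$ is feasible for \eqref{prog:empty_check_feas_prob} and achieves objective $\norm{A\coef^* - b}_2^2 = 0$. Since the objective is nonnegative, the minimum value is exactly $0$.

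For the backward direction ($\Leftarrow$), I first argue the minimum is attained. The ball product $\ballprodfn{\idxset}$ is a Cartesian product of closed $p$-norm unit balls in $\R^\ngen$, hence compact, and the objective $\coef \mapsto \norm{A\coef - b}_2^2$ is continuous, so by the extreme value theorem there exists a minimizer $\coef^* \in \ballprodfn{\idxset}$. If the minimum value is $0$, then $A\coef^* = b$ as well. Hence $\coef^*$ satisfies all constraints in Def.\ \ref{def:ellipsotope}, making $c + G\coef^* \in E$, so $E \neq \emptyset$.

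The main obstacle, if any, is the attainment argument in the backward direction: writing ``$\min = 0$'' implicitly presumes a minimizer exists, and this must be justified by compactness of $\ballprodfn{\idxset}$ and continuity of the squared-residual objective. Everything else is a direct rewriting of the ellipsotope's defining constraints, and the assumption $\hpfn{A,b} \neq \emptyset$ is not actually needed for the corollary itself (it only mattered for Prop.\ \ref{prop:empty_and_point_check}, where the $\max$-of-norms objective required a nonempty feasible set over which to minimize).
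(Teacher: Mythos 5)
Your proof is correct and follows essentially the same route as the paper's (which simply observes that any feasible $\coef$ satisfies $A\coef = b$ and $\coef \in \ballprodfn{\idxset}$); you merely spell out both directions explicitly. The compactness/attainment argument you add for the backward direction is a reasonable bit of extra rigor the paper leaves implicit, and your observation that $\hpfn{A,b} \neq \emptyset$ is not needed here is also correct.
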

\begin{proof}
This formulation follows directly from the fact that, for any feasible $\coef$, we have $A\coef = b$ and $\coef \in \ballprodfn{\idxset}$.
\end{proof}

\noindent Notice that, in the case of a constrained zonotope, \eqref{prog:empty_check_feas_prob} becomes a bounded-value least squares problem.
\new{Also note, one can reformulate point containment in \eqref{prog:point_containment} as per \eqref{prog:empty_check_feas_prob} by treating it as the intersection of an ellipsotope with a point.}
\subsection{Properties of 2-Ellipsotopes}

\new{We now discuss the special case of 2-ellipsotopes (i.e., ellipsotopes with $p = 2$), which can represent zonotopes and constrained zonotopes along with ellipsoids.}
First, we confirm that basic 2-ellipsotopes are ellipsoids and vice-versa.
Second, we notice that constrained 2-ellipsotopes are in fact basic 2-ellipsotopes.
Later, in Sec. \ref{sec:order_reduction}, we leverage these properties to create an order reduction strategy for 2-ellipsotopes.
\newnc{Note, we use matrix square roots and inverses, both of which have worst-case complexity $\bigO(n^3)$ for an $n\times n$ matrix, as they rely on Schur \cite{bjorck1983schur} or LU factorization \cite{bunch1974triangular}.}

\begin{lem}[Ellipsoid-Ellipsotope Equivalence]\label{lem:ellipsotopes_are_ellipsoids}
(Claim 1) Let $E = \ellipsoidfn{c,Q} \subset \R^n$ be an ellipsoid as in \eqref{eq:ellipsoid}.
Then $E = \ellifn{2}{c,(\sqrt{Q})\inv} \subset \R^n$.
(Claim 2) Suppose $E = \ellifn{2}{c,G} \subset \R^n$.
Then there exists $Q \in \R^{n\times n}$, $Q \succ 0$, such that $E = \ellipsoidfn{c,Q}$.
\end{lem}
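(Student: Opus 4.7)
\textbf{Proof plan for Lemma \ref{lem:ellipsotopes_are_ellipsoids}.}
The two claims are converse statements relating the quadratic-form description of an ellipsoid to the linear-image-of-a-ball description of a basic $2$-ellipsotope. For both, my strategy is to exhibit an explicit bijection between the coefficient $\coef$ satisfying $\norm{\coef}_2 \leq 1$ and the point $x - c$ satisfying the ellipsoidal inequality, and then to verify the two norm constraints agree.

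For Claim 1, set $G = (\sqrt{Q})\inv$, which is well-defined because $Q \succ 0$ implies $\sqrt{Q}$ is PSD with strictly positive eigenvalues (hence invertible). Given $x \in \ellipsoidfn{c,Q}$, I will define the candidate coefficient $\coef = \sqrt{Q}(x-c)$ and directly check that $c + G\coef = x$ and $\norm{\coef}_2^2 = (x-c)\trans \sqrt{Q}\trans \sqrt{Q}(x-c) = (x-c)\trans Q (x-c) \leq 1$, so $x \in \ellifn{2}{c,G}$. Conversely, given $x = c + G\coef$ with $\norm{\coef}_2 \leq 1$, I will plug in and compute $(x-c)\trans Q (x-c) = \coef\trans G\trans Q G \coef = \coef\trans \coef \leq 1$, using $G\trans Q G = (\sqrt{Q})^{-\top} Q (\sqrt{Q})\inv = I$. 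Both containments together give equality.

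For Claim 2, take $G \in \R^{n\times m}$ from the basic $2$-ellipsotope; the natural candidate is $Q = (GG\trans)\inv$, which is PSD and invertible precisely when $G$ has full row rank. (This is the implicit nondegeneracy assumption: if $G$ is rank deficient, the $2$-ellipsotope is a lower-dimensional degenerate ellipsoid and cannot match any $\ellipsoidfn{c,Q}$ with $Q \succ 0$; I will state this caveat in the writeup.) For the containment $\ellifn{2}{c,G} \subseteq \ellipsoidfn{c,Q}$, take $x = c + G\coef$ with $\norm{\coef}_2\leq 1$ and pass to the minimum-norm preimage $\coef^* = G\trans(GG\trans)\inv(x-c)$, which satisfies $G\coef^* = x - c$ and $\norm{\coef^*}_2 \leq \norm{\coef}_2 \leq 1$; then $(x-c)\trans Q (x-c) = \norm{\coef^*}_2^2 \leq 1$. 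For the reverse containment, given $(x-c)\trans Q (x-c) \leq 1$, explicitly choose $\coef = G\trans(GG\trans)\inv(x-c)$ and verify that $c + G\coef = x$ and $\norm{\coef}_2^2 = (x-c)\trans Q (x-c) \leq 1$.

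The main obstacle is the rank-deficient case in Claim 2: the statement as written tacitly requires $G$ to have full row rank so that $GG\trans$ is invertible, and I would flag this explicitly rather than silently assume it. The rest of the argument is routine once the candidate $\coef$ and $Q$ are chosen, since the identities $\sqrt{Q}\trans \sqrt{Q} = Q$ and $G G\trans (GG\trans)\inv = I$ do all of the work.
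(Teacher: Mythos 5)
Your proposal is correct and follows essentially the same route as the paper: Claim 1 via $G = (\sqrt{Q})\inv$ with the identity $G\trans Q G = \eye$, and Claim 2 via $Q = (GG\trans)\inv$, which coincides with the paper's choice $Q = (G\pinv)\trans G\pinv$ when $G$ has full row rank. Your two points of extra care are both warranted: the paper's proof jumps from ``there exists $\coef$ with $G\coef = x-c$'' to ``$\coef = G\pinv(x-c)$'' without the minimum-norm argument you supply, and it concludes only $Q \succeq 0$ while the lemma asserts $Q \succ 0$, so the full-row-rank caveat you flag is genuinely needed for the statement as written.
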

\begin{proof}
\textit{(Claim 1)}
Note $(\sqrt{Q})\inv \succ 0$ exists because $Q \succ 0$.
Suppose $x \in E$, so $(x-c)\trans Q(x-c) \leq 1$.
We want to find $G$ and $\beta$ such that $(x-c) = G\coef$ and $\norm{\coef}_2 \leq 1$.
If we set $(G\coef)\trans Q(G\coef) = \coef\trans\coef$, then $G\coef = (\sqrt{Q})\inv\coef$.

\textit{(Claim 2)}
Suppose that $x \in E$, so there exists $\coef$ such that $G\coef = x - c$.
It follows from Proposition \ref{prop:empty_and_point_check} that $\coef = G\pinv(x-c)$, where $G\pinv$ is the Moore-Penrose pseudoinverse of $G$.
Since $\norm{\coef}_2^2 = \coef\trans\coef$, we have $\norm{\coef}_2^2 = (x-c)\trans(G\pinv)\trans(G\pinv)(x-c)$.
\new{Then, pick $Q = (G\pinv)\trans(G\pinv)$ (notice $Q \succeq 0$ by construction).}
\end{proof}

\noindent While these claims are well-known in the literature (e.g., \cite[(3)]{scott2016constrained_zonotopes}), we write the proof to clarify Lem. \ref{lem:basic_2_etope_order_reduc} in Section \ref{sec:order_reduction}.

Next, we find a further equivalence between constrained and basic 2-ellipsotopes.
To prove this, first, we confirm that the (nonempty) intersection of an $n$-dimensional ellipsoid with an affine subspace is a lower-dimensional ellipsoid:

\begin{lem}\label{lem:intersect_ellipsoid_with_plane}
\new{Let $B = \ballfn{2}{m}$ (the $m$-dimensional 2-norm ball) and $H = \hpfn{A,b}$ (an affine hyperplane), with $A \in \R^{n\times m}$ full row rank, $n < m$, and $b \in \R^n$.
Suppose $H$ intersects the interior of $B$ (i.e., $\card{B\cap H} > 1$).
Then $B \cap H$ is the affine image of an $(m-n)$-dimensional 2-norm ball.
That is, there exist a translation $t \in \R^m$ and a linear map $T: \R^{m-n} \to \R^m$ such that $T\ballfn{2}{m-n} + t = B \cap H \subset \R^m$.}
\end{lem}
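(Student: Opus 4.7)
The plan is to reduce the intersection to a single Euclidean-norm constraint by expressing $H$ via an orthonormal basis of $\ker(A)$ and exploiting the orthogonality between $\ker(A)$ and $\regtext{range}(A\trans)$. Since $A$ has full row rank, $AA\trans$ is invertible and $t := A\pinv b = A\trans(AA\trans)\inv b \in H$ is well defined; this is the point on $H$ closest to the origin. Let $N \in \R^{m\times(m-n)}$ be a matrix whose columns form an orthonormal basis of $\ker(A)$, so that $AN = 0$ and $N\trans N = \eye_{m-n}$. Every $x \in H$ can then be written uniquely as $x = t + Nz$ for some $z \in \R^{m-n}$.

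Next I would compute the norm. Since $t \in \regtext{range}(A\trans) = \ker(A)^{\perp}$, we have $N\trans t = 0$, and thus
\begin{align*}
\norm{t + Nz}_2^2 \;=\; \norm{t}_2^2 + 2t\trans Nz + z\trans N\trans N z \;=\; \norm{t}_2^2 + \norm{z}_2^2.
\end{align*}
Hence $x = t + Nz \in B \cap H$ if and only if $\norm{z}_2^2 \leq 1 - \norm{t}_2^2$.

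The assumption $\card{B \cap H} > 1$ is what guarantees $\norm{t}_2 < 1$: if $\norm{t}_2 > 1$ the intersection is empty, and if $\norm{t}_2 = 1$ then $t$ is the unique tangency point of $H$ and $\bdry{B}$, contradicting $\card{B \cap H} > 1$. Setting $r := \sqrt{1 - \norm{t}_2^2} > 0$ and $T := r N \in \R^{m\times(m-n)}$, the substitution $z = r w$ yields
\begin{align*}
B \cap H \;=\; \{\, t + Nz \;:\; \norm{z}_2 \leq r \,\} \;=\; T\,\ballfn{2}{m-n} + t,
\end{align*}
which is the claimed affine image of the $(m-n)$-dimensional 2-norm unit ball.

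The only delicate point is the strict inequality $\norm{t}_2 < 1$; once this is extracted from the hypothesis $\card{B \cap H} > 1$ the rest is a direct orthogonal-decomposition calculation, so I do not anticipate a substantive obstacle beyond carefully choosing $t$ to be the \emph{orthogonal} projection of the origin onto $H$ (so that the cross term $t\trans Nz$ vanishes).
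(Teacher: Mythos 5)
Your proof is correct and follows essentially the same construction as the paper's: take $t = A\pinv b$ as the orthogonal projection of the origin onto $H$, and build $T$ from an orthonormal basis of $\ker{A}$ scaled by the radius $\sqrt{1-\norm{t}_2^2}$. If anything, your explicit Pythagorean computation $\norm{t+Nz}_2^2 = \norm{t}_2^2 + \norm{z}_2^2$ makes rigorous the step the paper only asserts geometrically (that all boundary points of $B\cap H$ are equidistant from $t$).
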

\begin{proof}
\newnc{We prove the claim by construction.
First, let $t = A\pinv b \in H$.
Notice that $t \in H$ because $n < m$ and $A$ is full row rank; i.e., $t$ is the orthogonal projection of $0$ onto $H$.
Then, since (i) $B$ is centered at $0$, (ii) any point $p \in \bdry{B \cap H}$ has $\norm{p}_2 = 1$ by definition of $B$, and (iii) $H$ intersects the interior of $B$, it follows that there exists $c > 0$ such that $c = \norm{q - t}_2$ for any $q \in \bdry{B \cap H}$.
It also follows that $\norm{t}_2 < 1$, so $t \in B$.
In other words, $t$ is the center of an $(m-n)$-dimensional 2-norm ball defined by $B \cap H$ and embedded in $\R^m$.
To construct T, let $\{e_1,\cdots,e_{m-n}\} \subset \R^m$ be an orthonormal basis for $\ker{A}$.
Then $T$ is given by the matrix $[ce_1, \cdots, ce_{m-n}]$ (i.e., $T$ rotates $\ballfn{2}{m-n}$ to be parallel to $H$ and scales it by $c$).}
\end{proof}

\noindent \new{Intuitively, the projection of a high-dimensional ellipsoid to a lower-dimensional space is again an ellipsoid.}

\begin{lem}[Basic and Constrained 2-Ellipsotope Equivalence]\label{lem:basic_and_cons_etope_equiv}
Let $E = \ellifn{2}{c,G,A,b}$ be a nonempty constrained ellipsotope with $A \in \R^{\ncon\times \ngen}$, $b \in \R^\ncon$, and $\ncon < \ngen$.
Then there exist $c', G'$ such that $E = \ellifn{2}{c',G'}$.
\end{lem}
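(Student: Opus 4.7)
The plan is to recognize the feasible-coefficient set of $E$ as the intersection of a unit 2-norm ball with an affine subspace in $\R^{\ngen}$, then invoke Lemma \ref{lem:intersect_ellipsoid_with_plane} to re-express this intersection as the affine image of a lower-dimensional 2-norm ball. Pushing this representation forward through the map $\coef \mapsto c + G\coef$ will produce a basic 2-ellipsotope.

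First, I would rewrite $E = c + G F$, where $F = \ballfn{2}{\ngen} \cap \hpfn{A,b}$ is exactly the coefficient set in Def.~\ref{def:ellipsotope}. Without loss of generality I assume $A$ has full row rank: since $E$ is nonempty we have $\hpfn{A,b} \neq \emptyset$, so any linearly dependent rows of $[A \mid b]$ may be discarded without altering $\hpfn{A,b}$, and the inequality $\ncon < \ngen$ is preserved.

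Next, in the generic case where $\hpfn{A,b}$ meets the interior of $\ballfn{2}{\ngen}$ (i.e., $\card{F} > 1$), Lemma \ref{lem:intersect_ellipsoid_with_plane} supplies a translation $t \in \R^{\ngen}$ and a linear map $T : \R^{\ngen - \ncon} \to \R^{\ngen}$ with $F = T\ballfn{2}{\ngen - \ncon} + t$. Substituting,
\[
E = c + G\bigl(T\ballfn{2}{\ngen - \ncon} + t\bigr) = (c + Gt) + GT \cdot \ballfn{2}{\ngen - \ncon},
\]
so taking $c' = c + Gt$ and $G' = GT$ yields $E = \ellifn{2}{c', G'}$, as required.

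The remaining edge case is tangency, in which $F = \{\coef_0\}$ is a single point and Lemma \ref{lem:intersect_ellipsoid_with_plane} does not literally apply; here $E = \{c + G\coef_0\}$ is a singleton, which is the (degenerate) basic 2-ellipsotope $\ellifn{2}{c + G\coef_0, \emptyarr}$. I expect the main obstacle to be not the algebra, which is essentially automatic once Lemma \ref{lem:intersect_ellipsoid_with_plane} is available, but rather the bookkeeping around reducing $A$ to full row rank and handling this tangency case uniformly with the generic one.
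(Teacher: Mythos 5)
Your proposal is correct and follows essentially the same route as the paper's proof: both identify the coefficient set with $\ballfn{2}{\ngen}\cap\hpfn{A,b}$, invoke Lemma~\ref{lem:intersect_ellipsoid_with_plane} to write it as $T\ballfn{2}{\ngen-\ncon}+t$, and take $c' = c+Gt$, $G' = GT$. If anything, you are more careful than the paper, which silently assumes $A$ is full row rank and that the affine subspace meets the interior of the ball; your rank-reduction and tangency remarks close those small gaps.
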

\begin{proof}
This follows from Lem. \ref{lem:intersect_ellipsoid_with_plane}.
Since $E$ is nonempty, we can construct an affine map parameterized by $T$ and $t$ such that $T\ballfn{2}{\ngen-\ncon} + t = B \subset \R^\ngen$.
Then, for any $\coef \in \ballfn{2}{\ngen-\ncon}$, we have $c + G(T\coef + t) \in E$.
Choose $c' = c + Gt$ and $G' = GT$ to complete the proof.
\end{proof}

Note that 2-ellipsotopes let us represent ellipsoidal Gaussian confidence level sets.
We demonstrate this via a robot path verification example in Sec. \ref{subsec:num_ex_robot_path_planning}.
\subsection{Relationships to Other Set Representations}\label{subsec:etope_relate_to_others}

Per Lem. \ref{lem:ellipsotopes_are_ellipsoids}, ellipsotopes generalize ellipsoids and, as a corollary, superellipsoids.
We see from the Definition \ref{def:ellipsotope}, specifically \eqref{eq:elli_defn} that ellipsotopes generalize (constrained) zonotopes, by comparison to \eqref{eq:zono_defn}.
And, from Lem. \ref{lem:zonotopes_and_ellipsoids_are_ellipsotopes}, if the index set is $\idxset = \{\{1\},\{2\},\cdots,\{\ngen\}\}$ for an ellipsotope with $\ngen$ generators, then the ellipsotope is also a (constrained) zonotope.

Another useful set representation is the \defemph{capsule}, often used to represent robot manipulator links for efficient collision detection \cite{macagon2003efficient_capsules,liu2016algorithmic_capsules}.
A capsule is the Minkowski sum of a line segment with a sphere, which we can represent as an ellipsotope per Lem. \ref{lem:ellipsotopes_are_ellipsoids} and Proposition \ref{prop:minkowski_sum}.
Importantly, ellipsotopes allow us to generalize capsules to Minkowski sums of line segments with, e.g., confidence level set ellipsoids of a Gaussian distribution.

Finally, one can show that all ellipsotopes are constrained polynomial zonotopes (CPZs) \cite{kochdumper2020cons_poly_zono} by extending the proof that all ellipsoids are CPZs (see the appendix).

\section{Order Reduction}\label{sec:order_reduction}

A commonly-used operation in zonotope reachability analysis is order reduction, or the approximation of a zonotope by a new zonotope with fewer generators.
This operation is necessary because reachability analysis often uses Minkowski sums, which increase the number of generators of a zonotope (or ellipsotope, per Prop. \ref{prop:minkowski_sum}).

A variety of order reduction techniques exist for zonotopes, most commonly achieved by enclosing a subset of a zonotope's generators in a bounding box, the sides of which become new generators \cite{girard2005reachability_zono,combastel2005zono_state_observer}.
This strategy can be improved or guided by a variety of heuristics \cite[Ch. 2]{althoff2010_dissertation}.
\newnc{See \cite{kopetzki2017methods,yang2018comparison} for a thorough review and comparison of methods.}
In the case of polynomial zonotopes, which are not necessarily convex, one can apply a similar strategy of overapproximating a subset of generators with a zonotope or interval \cite{kochdumper2020sparse_poly_zono,holmes2020reachable}.
For constrained zonotopes, the linear constraints necessitate alternative strategies \cite{scott2016constrained_zonotopes,raghuraman2020set_ops_conzono}.
To proceed, we discuss 2-ellipsotopes in particular, then comment on general strategies.

\subsection{Order Reduction for 2-Ellipsotopes}
\newnc{For reducing 2-ellipsotopes, we can leverage properties of ellipsoids.
Importantly, we can bound the number of generators required to exactly represent any 2-ellipsotope (Prop. \ref{prop:lift_then_reduce}).}

\subsubsection{Basic 2-Ellipsotopes}
First, we note that a basic 2-ellisotope in $\R^\ndim$ never requires more than $\ndim$ generators:

\begin{lem}[Exact Order Reduction of Basic 2-Ellipsotopes]\label{lem:basic_2_etope_order_reduc}
Let $E = \ellifn{2}{c,G} \subset \R^n$ with $G \in \R^{\ndim\times\ngen}$ full row rank and $\ngen > \ndim$.
Then $E = \ellifn{2}{c,\tilde{G}}$, where
\begin{align}\label{eq:G_basic_2_etope_order_reduc}
    \tilde{G} = \left(\sqrt{(G\pinv)\trans(G\pinv)}\right)\inv,
\end{align}
and $G\pinv$ is the Moore-Penrose pseudoinverse of $G$.
\end{lem}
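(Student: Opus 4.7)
The plan is to chain together Claim 2 and Claim 1 of Lemma \ref{lem:ellipsotopes_are_ellipsoids}, using the full row rank hypothesis to turn the $Q\succeq 0$ conclusion of Claim 2 into the $Q\succ 0$ needed by Claim 1. Specifically, I will first show that under the hypothesis $G$ has full row rank, the matrix $Q := (G\pinv)\trans(G\pinv)$ is positive definite; second, apply Claim 2 to write $E = \ellipsoidfn{c,Q}$; third, apply Claim 1 to rewrite that ellipsoid as the basic 2-ellipsotope with generator matrix $(\sqrt{Q})\inv = \tilde G$.

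For the first step, I will use the closed form of the pseudoinverse for a full row rank matrix, namely $G\pinv = G\trans(GG\trans)\inv$, which exists because $G$ full row rank implies $GG\trans \succ 0$. Substituting yields
\begin{align}
(G\pinv)\trans(G\pinv) &= (GG\trans)\inv G\,G\trans (GG\trans)\inv = (GG\trans)\inv,
\end{align}
which is positive definite. This is the key quantitative computation and also the only step that genuinely uses the full row rank / $m>n$ hypothesis; it will justify both the well-definedness of $\sqrt{Q}$ and its invertibility, so that $\tilde G$ is well-defined.

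For the second and third steps, the work is essentially bookkeeping. Claim 2 of Lemma \ref{lem:ellipsotopes_are_ellipsoids} (applied with the $Q$ computed above) states that $x \in E$ iff $(x-c)\trans Q (x-c) \le 1$, i.e., $E = \ellipsoidfn{c,Q}$. Claim 1, applied to this ellipsoid, gives $E = \ellifn{2}{c,(\sqrt{Q})\inv}$, which by the identity $(\sqrt{Q})\inv = \bigl(\sqrt{(G\pinv)\trans(G\pinv)}\bigr)\inv = \tilde G$ is exactly the claimed reduced representation. I should remark briefly on why $\tilde G \in \R^{n \times n}$, which follows because $(G\pinv)\trans(G\pinv) \in \R^{n\times n}$, confirming that exactly $n$ generators suffice.

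The main obstacle is conceptually minor but notationally important: Lemma \ref{lem:ellipsotopes_are_ellipsoids} Claim 2 as proved in the excerpt only concludes $Q \succeq 0$ (from the outer-product construction), whereas Claim 1 requires $Q \succ 0$ to define $\sqrt{Q}\inv$. I need to make clear that the full row rank hypothesis on $G$ upgrades the PSD conclusion to strict positive definiteness via the $G\pinv = G\trans(GG\trans)\inv$ identity above; without this upgrade, the inverse in \eqref{eq:G_basic_2_etope_order_reduc} would not be guaranteed to exist. Once this is in place, no computation beyond the one display equation is required.
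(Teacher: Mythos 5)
Your proposal is correct and follows essentially the same route as the paper: convert $E$ to an ellipsoid via Claim 2 of Lemma \ref{lem:ellipsotopes_are_ellipsoids} and back via Claim 1. Your explicit computation $(G\pinv)\trans(G\pinv) = (GG\trans)\inv \succ 0$ usefully fills in a detail the paper only asserts (that the matrix under the square root is positive definite, not merely PSD), but the argument is otherwise identical.
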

\begin{proof}
This follows from Lem. \ref{lem:ellipsotopes_are_ellipsoids} by converting $E$ to an ellipsoid then back to an ellipsotope.
Note, the matrix in the outermost parentheses of \eqref{eq:G_basic_2_etope_order_reduc} is invertible because it is the square root of a positive definite matrix.
\end{proof}

\noindent Notice that $\tilde{G} \in \R^{\ndim\times\ndim}$, so $E$ needs only $\ndim$ generators.
\newnc{The complexity of \eqref{eq:G_basic_2_etope_order_reduc} is $\bigO(\ngen^3 + \ndim^3)$ as it is dominated by either the matrix product, square root, or inverse \cite{bjorck1983schur,bunch1974triangular}.}

\subsubsection{General Strategy for 2-Ellipsotopes}
Our general strategy is to treat 2-ellipsotopes as a Minkowski sum of ellipsoids.
This is because order reduction is usually necessary after several Minkowski sum operations result in a large number of generators during, e.g., reachability analysis.

To explain our approach, we consider a simple case.
Consider $E = \ellifn{2}{c,G,A,b,\idxset} \subset \R^\ndim$ with $\ngen > \ndim$ generators and with $\ncon$ linear constraints.
Suppose that we can write
    $E = E_1 \oplus E_2$
where $E_1 = \ellifn{2}{c_1,G_1,A_1,b_1}$ with $\ngen_1$ generators and $E_2 = \ellifn{2}{c_2,G_2,A_2,b_2}$ with $\ngen_2$ generators.
Notice that $\ngen = \ngen_1 + \ngen_2$.
Our goal is to find $\tilde{E}$ for which
    $\tilde{E} = \ellifn{2}{c',G'} \supset E$.

First, by Lem. \ref{lem:basic_and_cons_etope_equiv}, we can find $t_1$ and $T_1$ such that 
    $E_1 = \ellifn{2}{c_1 + G_1t_1,\ G_1T_1}$,
and similarly for $E_2$.
Then, per Lem. \ref{lem:ellipsotopes_are_ellipsoids}, we can find $Q_1$ to represent $E_1$ as an ellipsoid, 
    $E_1 = \ellipsoidfn{c_1 + G_1t_1,\ Q_1}$,
and similarly we can find $Q_2$ for $E_2$.

We now apply the method in \cite{halder2018parameterized_ellipsoid_approx} to create a minimum-volume outer ellipsoid (MVOE) $E\reduce \supseteq E_1 \oplus E_2$. 
That is, we can write
    $E\reduce = \ellipsoidfn{c\reduce,Q\reduce} \supset E_1 \oplus E_2$,
\newnc{By Lem. \ref{lem:ellipsotopes_are_ellipsoids}, we have $E\reduce \supset E$.
By Lem. \ref{lem:basic_2_etope_order_reduc}, $E\reduce$ needs no more than $\ndim < \ngen$ generators.
Therefore, we can choose $\tilde{E} = E\reduce$.}

\new{
We note that a variety of techniques exist to tightly approximate an MVOE \cite{halder2018parameterized_ellipsoid_approx,kurzhanski2000ellipsoidal,kurzhanskiy2006ellipsoidal_toolbox}.
In this work, we apply \cite{halder2018parameterized_ellipsoid_approx}, which is equivalent to the parameterization in \cite{kurzhanskiy2006ellipsoidal_toolbox}, but enables fixed-point iteration to find the MVOE more quickly than standard semi-definite programming approaches.
}

\subsubsection{Choosing Which Ellipsoids to Overapproximate}\label{subsubsec:choosing_MVOE_ellipsoids}

The above example considered an ellipsotope created as the Minkowski sum of a pair of ellipsoids, so the order reduction strategy was to overapproximate this sum with a single ellipsoid.
We now extend this idea to the case when an ellipsotope is a Minkowski sum of many ellipsoids.

First, we set up our assumptions.
Consider again the ellipsotope $E = \ellifn{2}{c,G,A,b,\idxset} \subset \R^\ndim$ with $\ngen$ generators.
Assume that we can write $E$ as the Minkowski sum of several basic 2-ellipsotopes, which we call \defemph{component ellipsoids}:
\begin{align}
    E = E_1 \oplus E_2 \oplus \cdots \oplus E_\nsum,\label{eq:etope_mink_sum_component_etopes}
\end{align}
for some $\nsum \in \N$.
That is, each $E_i = \ellifn{2}{c_i,G_i}$.
Notice that $E$ requires at most $\nsum\times\ndim$ generators.

Now, suppose that we want to find $\tilde{E}$ such that $\tilde{E} \supset E$ and $\tilde{E}$ has $\ngen - \ndim$ generators; in other words, we want to reduce the number of 2-ellipsotopes in \eqref{eq:etope_mink_sum_component_etopes} by one.
To do so, we choose $i, j \in \N_\nsum$ and construct $E\reduce = E_i \oplus E_j$ such that
\begin{align}
    \tilde{E} = \bigg(\bigoplus_{l \in \N_\nsum\setminus\{i,j\}} E_l \bigg) \oplus E\reduce.
\end{align}

The question is then how to choose $i$ and $j$.
Our goal for choosing $i$ and $j$ is to minimize the conservativeness introduced by overapproximating $E_i \oplus E_j$.
The most straightforward option is to choose the $(i,j)$ pair for which the MVOE has the smallest volume.
\new{For an ellipsoid $E = \ellipsoidfn{c,Q} \subset \R^\ndim$, the volume is proportional to $\det(Q\inv)$ \cite[Sec. I]{halder2018parameterized_ellipsoid_approx}.
So, by Lem. \ref{lem:ellipsotopes_are_ellipsoids} and because all component ellipsoids are in $\R^\ndim$, we can choose those for which $\det(((G\pinv)\trans(G\pinv))\inv)$ is smallest.
Note, the determinant takes $\bigO(n^3)$ time via LU decomposition \cite{bunch1974triangular}.}

\new{
However, it may be computationally expensive to compute the MVOE for every possible pair (of which there are $r^2$ for $r$ component ellipsoids).
Instead, we apply a heuristic.
Let $Q_1$ and $Q_2$ be ellipsoid shape matrices as in \eqref{eq:ellipsoid}.
We use \cite{halder2018parameterized_ellipsoid_approx} to solve for a value $\zeta$ such that the MVOE's shape matrix is
\begin{align}
    Q_\oplus = \left((1+\zeta)Q_1\inv + (1+\tfrac{1}{\zeta})Q_2\inv\right)\inv.
\end{align}
We find empirically that $\zeta \approx 1$ in most cases.
So, we use the following heuristic to pick $(i,j)$:
\begin{align}\label{eq:order_reduc_heur_2_etope}
    (i,j) = \argmax_{i,j \in \N_\nsum}\  \left(\det\!\left((2Q_i\inv + 2Q_j\inv)\inv\right)\right)\inv
\end{align}
We evaluate the quality of this heuristic in Sec. \ref{subsec:order_reduction_examples}; in short, it correlates strongly with the volume of the MVOE.}


\subsubsection{Identifying Component Ellipsoids}

In Section \ref{subsec:etope_ops}, we found that intersections between ellipsotopes, hyperplanes, and halfspaces all introduce linear constraints.
Strategies exist to conservatively simplify these linear constraints for constrained zonotopes \cite{scott2016constrained_zonotopes,raghuraman2020set_ops_conzono}. 
For 2-ellipsotopes, we can instead use the index set and constraints to identify component ellipsoids. 

Notice that all intersections introduce a new block row to the ellipsotope constraints (see Props. \ref{prop:intersection}, \ref{prop:hyperplane_intersection}, and \ref{prop:halfspace_intersection}), while placing any existing constraints either block-diagonally (in the case of ellipsotope-ellipsotope intersection) or with zero-padding (for halfspace intersection).
Furthermore, the ellipsotope's index set contains the indices of the columns corresponding to the constraints that existed before the intersection procedure.
Therefore, given an arbitrary ellipsotope, if we identify indices in the index set that correspond to a block-diagonal arrangement of linear constraints, then we can extract the component ellipsoids and reduce them with Lem. \ref{lem:basic_and_cons_etope_equiv}.

To illustrate this idea with an example, consider an ellipsotope $E = \ellifn{2}{c,G,A,b,\idxset}$ with $\ngen$ generators.
Suppose that $A = \diag{A_1,A_2} \in \R^{2\times\ngen}$, $A_1 \in \R^{1\times\ngen_1}$, and $A_2 \in \R^{1\times\ngen_2}$.
Also suppose $\idxset = \{\N_{\ngen_1},\N_{\ngen_2}+\ngen_1\}$.
Then
\begin{align}\begin{split}
    E =~&\ellifn{2}{c,G\idx{:,\N_{\ngen_1}},A_1,b\idx{\N_{\ngen_1}}} \oplus\\
        &\oplus\ellifn{2}{c,G\idx{:,\N_{\ngen_2}},A_2,b\idx{\N_{\ngen_2}}}.
\end{split}\end{align}
In other words, we have broken $E$ into two component ellipsoids, which we can then reduce as above.

\subsubsection{Lift-then-Reduce}
\new{It may not be possible to identify component ellipsoids if the constraint matrix does not have a block-diagonal structure.
However, we can apply a lifting strategy \cite[Prop. 3]{scott2016constrained_zonotopes} to shift constraints into the generator matrix, producing an ellipsotope in the form of \eqref{eq:etope_mink_sum_component_etopes}, albeit in higher dimensions.
By leveraging the properties of 2-ellipsotopes, we then have the following bound:

\begin{prop}[Lift-then-Reduce]\label{prop:lift_then_reduce}
Let $E = \ellifn{2}{c,G,A,b,\idxset} \subset \R^\ndim$ with $\ngen$ generators and $\ncon$ constraints.
Then there exists an ellipsotope $E\reduce = \ellifn{2}{c,G\reduce,A\reduce,b,\idxset\reduce} \subset \R^\ndim$ such that $E = E\reduce$ and $E\reduce$ has no more than $(\ndim+\ncon)\cdot\card{\idxset}$ generators.
\end{prop}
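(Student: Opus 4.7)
The plan is to use a lifting construction that encodes the linear constraints of $E$ inside a higher-dimensional generator matrix, apply the exact reduction for basic 2-ellipsotopes (Lem. \ref{lem:basic_2_etope_order_reduc}) to each block of the index set, and then unlift to recover an ellipsotope of the required form.

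First, for each $J \in \idxset$ I would define the lifted generator matrix $\tilde{G}_J = \begin{bmatrix} G\idx{:,J} \\ A\idx{:,J} \end{bmatrix} \in \R^{(\ndim+\ncon)\times\card{J}}$ and the component basic 2-ellipsotope $\tilde{E}_J = \ellifn{2}{0,\tilde{G}_J} \subset \R^{\ndim+\ncon}$. A direct expansion against Def. \ref{def:ellipsotope} shows that $x \in E$ if and only if there exist $\tilde{x}_J \in \tilde{E}_J$ for every $J \in \idxset$ such that $\sum_{J \in \idxset} \tilde{x}_J = (x - c,\, b)$; equivalently, $E - c$ is the projection onto the first $\ndim$ coordinates of the slice of $\bigoplus_{J \in \idxset} \tilde{E}_J$ at $b$ in the last $\ncon$ coordinates.

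Next, I apply Lem. \ref{lem:basic_2_etope_order_reduc} to each component. Each $\tilde{E}_J$ is a (possibly degenerate) basic 2-ellipsotope in $\R^{\ndim+\ncon}$, hence representable with at most $\ndim+\ncon$ generators; call the reduced matrix $\tilde{G}'_J$ and partition its rows into an upper block $G'_J$ of size $\ndim$ and a lower block $A'_J$ of size $\ncon$. Concatenating horizontally across all $\card{\idxset}$ blocks yields $G\reduce = [G'_{J_1},\, G'_{J_2},\, \ldots]$ and $A\reduce = [A'_{J_1},\, A'_{J_2},\, \ldots]$, while partitioning the new columns according to their source block defines $\idxset\reduce$. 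Because each componentwise reduction preserves equality $\tilde{E}_J = \ellifn{2}{0,\tilde{G}'_J}$, the lifted characterization above is unchanged, and therefore $E = \ellifn{2}{c, G\reduce, A\reduce, b, \idxset\reduce}$. The generator count is $\sum_{J \in \idxset} \min(\card{J},\,\ndim+\ncon) \leq (\ndim+\ncon) \cdot \card{\idxset}$.

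The main obstacle is the degenerate case, since Lem. \ref{lem:basic_2_etope_order_reduc} as stated requires $\tilde{G}_J$ to be full row rank. When $\tilde{G}_J$ is rank deficient, I would instead argue that a basic 2-ellipsotope in $\R^{\ndim+\ncon}$, being an affine image of a Euclidean ball, always admits an exact representation with at most $\ndim+\ncon$ generators (for instance by restricting to an orthonormal basis of its affine hull and embedding the reduced generator matrix back into $\R^{\ndim+\ncon}$). A secondary bookkeeping step is to verify that the horizontal concatenation of the $A'_J$'s recovers the original constraint $\sum_J A'_J \coef'_J = b$, which follows immediately from the Minkowski-sum identity in the lifted space.
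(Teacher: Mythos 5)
Your proof follows essentially the same route as the paper's: lift the constraints into the generator matrix so that each index block $J\in\idxset$ becomes a component ellipsoid in $\R^{\ndim+\ncon}$, reduce each block exactly to at most $\ndim+\ncon$ generators via Lem.~\ref{lem:basic_2_etope_order_reduc}, and split the reduced blocks back into generator and constraint rows. Your explicit handling of the rank-deficient case (where $\tilde{G}_J$ is not full row rank, so Lem.~\ref{lem:basic_2_etope_order_reduc} does not directly apply) is a small but genuine improvement over the paper's argument, which glosses over this point.
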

\begin{proof}
First, notice that $x \in E$ if and only if
\begin{align}\label{eq:lifted_ellipsotope}
    \begin{bmatrix} x \\ \zeros \end{bmatrix} \in 
        E_+ := \ellifn{2}{\begin{bmatrix} c \\ -b \end{bmatrix},
            \begin{bmatrix} G \\ A \end{bmatrix},\ \idxset}.
\end{align}
This is because there exists $\coef \in \ballprodfn{\idxset}$ with $A\coef = b$ such that $x = c + G\coef$, so {\footnotesize $\begin{bmatrix} x \\ b \end{bmatrix} = \begin{bmatrix} c \\ \zeros \end{bmatrix} + \begin{bmatrix} G \\ A \end{bmatrix}\coef$}.
Denote $E_+ = \ellifn{2}{c_+,G_+,\idxset}$, and let $J \in \idxset$.
From Prop. \ref{prop:minkowski_sum}, we have that $G_+\idx{:,J}$ corresponds to a component ellipsoid as in \eqref{eq:etope_mink_sum_component_etopes}.
If $\card{J} \geq \ndim + \ncon$, we can reduce $G_+\idx{:,J}$ exactly according to Lem. \ref{lem:basic_2_etope_order_reduc}; that is, every component ellipsoid of $E_+$ never needs more than $\ndim+\ncon$ generators.
To complete the proof, pick $G\reduce$ (resp. $A\reduce$) as the first $\ndim$ rows (resp. last $\ncon$ rows) of every reduced $G_+\idx{:,J}$ (which are concatenated horizontally after reduction) for each $J \in \idxset$, and set $J\reduce = \{\delta,\cdots,\delta+\ndim+\ncon-1\}$ with $\delta \in \N$ chosen appropriately for each reduced $G_+\idx{:,J}$.
Construct $\idxset\reduce$ from all such $J\reduce$.
\end{proof}
\noindent We call $E_+$ the \defemph{lifted} ellipsotope.
Note, one can further reduce $E_+$ by applying the MVOE strategy from Sec. \eqref{subsubsec:choosing_MVOE_ellipsoids} before constructing $E\reduce$ (in which case we overapproximate $E$).
The complexity of Prop. \ref{prop:lift_then_reduce} is $\bigO(\card{\idxset}\cdot(\ndim+\ncon)^3)$ since it requires $\card{\idxset}$ matrix products, inverses, and square roots in the worst case.
}

\subsection{General Strategies for Order Reduction}


We now briefly discuss order reduction when $p \neq 2$.
In short, strategies from the literature for zonotopes and constrained zonotopes still apply to ellipsotopes.
We leave strategies that leverage the $p$-norm structure to future work.

\subsubsection{Leveraging Component Zonotopes}

We noted above that order reduction for an arbitrary 2-ellipsotope follows from treating it as a Minkowski sum of component ellipsoids.
For a basic $p$-ellipsotope, we can adopt a similar strategy by considering component zonotopes.

First, notice that, by making a single generator's $p$-norm constraint independent from all other generators, we overapproximate an ellipsotope.
We call this \defemph{popping} a generator:
\begin{lem}[Generator Popping]\label{lem:generator_pop_zono_overapprox}
Consider the indexed ellipsotope $E = \ellifn{p}{c,G,\idxset}$.
Consider an arbitrary $J \in \idxset$ and suppose $j \in J$.
Define $\tilde{J} = (J\setminus\{j\})$ and $\tilde{\idxset} = (\idxset\setminus J) \cup \tilde{J} \cup \{j\}$.
Then $E \subset \tilde{E}$ where $\tilde{E} = \ellifn{p}{c,G,\tilde{\idxset}}$.
\end{lem}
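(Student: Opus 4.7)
The plan is to show that every coefficient vector $\coef$ that certifies membership $x \in E$ also certifies $x \in \tilde{E}$, so the same linear combination witnesses membership in the larger set. First I would verify that $\tilde{\idxset}$ is a valid index set: since $J$ is replaced by the disjoint pair $\tilde{J} = J \setminus \{j\}$ and $\{j\}$, whose union is $J$, $\tilde{\idxset}$ is again a partition of $\N_\ngen$. Hence $\tilde{E}$ is a well-defined indexed ellipsotope in the sense of Definition \ref{def:ellipsotope}.

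Next I would take an arbitrary $x \in E$ and pick $\coef \in \R^\ngen$ such that $x = c + G\coef$ and $\norm{\coef\idx{K}}_p \leq 1$ for every $K \in \idxset$. To conclude $x \in \tilde{E}$ using the same $\coef$, I must check the $p$-norm constraints for every $K' \in \tilde{\idxset}$. For $K' \in \idxset \setminus \{J\}$ these constraints are unchanged, so nothing is to be done. The only new constraints are $\norm{\coef\idx{\tilde{J}}}_p \leq 1$ and $\abs{\coef_j} \leq 1$.

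The key step is the elementary $p$-norm decomposition
\begin{align}
    \norm{\coef\idx{J}}_p^p \;=\; \norm{\coef\idx{\tilde{J}}}_p^p + \abs{\coef_j}^p,
\end{align}
which holds since $\tilde{J}$ and $\{j\}$ partition $J$. Because each summand is nonnegative and the total is at most $1$ by hypothesis, both $\norm{\coef\idx{\tilde{J}}}_p^p \leq 1$ and $\abs{\coef_j}^p \leq 1$ follow, i.e., both new constraints hold. Thus $\coef$ is feasible for $\tilde{E}$, giving $x \in \tilde{E}$ and hence $E \subseteq \tilde{E}$.

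There is no serious obstacle here; the only subtlety is bookkeeping, namely confirming that $\tilde{\idxset}$ is still a partition and that no existing constraint in $\idxset \setminus \{J\}$ is affected by the split. The containment $E \subseteq \tilde{E}$ is in general strict, since $\tilde{E}$ allows pairs $(\coef\idx{\tilde{J}}, \coef_j)$ with $\norm{\coef\idx{\tilde{J}}}_p^p + \abs{\coef_j}^p > 1$ but each term individually $\leq 1$, a fact worth noting (though not required for the statement as written) to justify the terminology \emph{overapproximation}.
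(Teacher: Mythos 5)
Your proof is correct, and it is in fact more careful than the paper's. The paper's entire argument is one line: for any feasible $\coef$, $\norm{\coef\idx{J}}_p \leq \norm{\coef\idx{\tilde{J}}}_p + \abs{\coef\idx{j}}$ by the triangle inequality. As written, that inequality bounds the \emph{old} constraint from above by the \emph{new} ones, which is the reverse of what the containment $E \subseteq \tilde{E}$ requires; the operative fact is the one you use, namely that restricting a coefficient vector to a subset of its support cannot increase its $p$-norm. Your additive decomposition $\norm{\coef\idx{J}}_p^p = \norm{\coef\idx{\tilde{J}}}_p^p + \abs{\coef\idx{j}}^p$ makes this monotonicity explicit and delivers both new constraints in one step, so your route is the more defensible one; your bookkeeping that $\tilde{\idxset}$ remains a valid partition and that the constraints indexed by $\idxset\setminus\{J\}$ are untouched is also a point the paper leaves implicit. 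One small caveat: the power decomposition assumes $p < \infty$; for $p = \infty$ the same conclusion follows from $\norm{\coef\idx{J}}_\infty = \max\!\left(\norm{\coef\idx{\tilde{J}}}_\infty, \abs{\coef\idx{j}}\right)$, so the lemma holds for every $p$ admitted by the definition. Your closing remark that the containment is generically strict correctly justifies the term ``overapproximation.''
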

\begin{proof}
For any feasible $\beta$, $\norm{\beta\idx{J}}_p \leq \norm{\beta\idx{\tilde{J}}}_p + |\beta\idx{j}|$ by the triangle inequality.
\end{proof}
\noindent In Lem. \ref{lem:generator_pop_zono_overapprox}, we have \defemph{popped} the $j$\ts{th} generator.

Then, a strategy for order reduction is as follows.
Suppose $E \subset \R^\ndim$ has $\ngen$ generators, and we seek to remove $n\reduce$ of them.
First, we pop the $n\reduce + \ndim$ smallest (in the 2-norm) generators.
Let $G = [G\keep, G\reduce]$ where $G\reduce$ contains these $n\reduce + \ndim$ generators; note we can reorder $G$ in this way without loss of generality.
Let $Z\reduce = \ellifn{p}{\zeros,G\reduce,\{\{1\},\{2\},\cdots,\{n\reduce\}\}}$, which is a zonotope by Lem. \ref{lem:zonotopes_and_ellipsoids_are_ellipsotopes}.
If we pop the $G\reduce$ generators, then $E = \ellifn{p}{c,G\keep,\idxset\keep} \oplus Z\reduce$, where $\idxset\keep$ is the original index set with the indices corresponding to $G\reduce$ removed, and then reorganized to match $G\keep$.
Finally, we can apply zonotope order reduction \cite{combastel2005zono_state_observer,girard2005reachability_zono,althoff2010_dissertation} to find an $\ndim$-dimensional interval $\tilde{Z}\reduce \supseteq Z\reduce$, which can be represented as an ellipsotope with $\ndim$ generators per Lem. \ref{lem:zonotopes_and_ellipsoids_are_ellipsotopes}.
Note that generator popping enables further simplification for $p = 2$, as one can overapproximate the zonotope created by all popped generators by a single ellipsoid using the technique in \cite{gassmann2020scalable_ellipsoid_zono_conversion}.
This MVOE approximation can be made tighter by applying \cite[Lem. 3]{malik2006gap_for_MVOE_of_zonotope} if $\ndim \approx n\reduce$.

\subsubsection{Constraint Reduction}

For $p \neq 2$, the result in Lem. \ref{lem:basic_and_cons_etope_equiv} no longer holds; that is, the intersection of a superellipsoid with an affine subspace is not always an affine map of a lower-dimensional superellipsoid, which can be seen by considering the $\infty$-norm ball intersecting a plane.
However, the constraint reduction strategies from \cite{scott2016constrained_zonotopes} and \cite{raghuraman2020set_ops_conzono} still apply.
For example, we can eliminate a constraint by adapting \cite[Prop. 5]{scott2016constrained_zonotopes}:

\begin{prop}\label{prop:constraint_dualization}
Let $E = \ellifn{p}{c,G,A,b,\idxset} \subset \R^\ndim$ with $\ngen$ generators and $\ncon$ constraints.
Let $\Gamma \in \R^{\ndim\times\ncon}$ and $\Lambda \in \R^{\ncon\times\ncon}$.
Then
\begin{align}
    E \subseteq \tilde{E} = \ellifn{p}{c + \Gamma b, G - \Gamma A, A - \Lambda A, b - \Lambda b,\idxset}.
\end{align}
\end{prop}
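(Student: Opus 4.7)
The plan is to prove containment pointwise by reusing the same coefficient vector. Take any $x \in E$ with certifying coefficient $\beta \in \R^\ngen$, so that $x = c + G\beta$, $\norm{\beta\idx{J}}_p \leq 1$ for every $J \in \idxset$, and $A\beta = b$. I will show that $\beta$ itself serves as a valid coefficient vector for $\tilde{E}$, which immediately yields $x \in \tilde{E}$.

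The verification breaks into three checks that mirror the three requirements in Def.~\ref{def:ellipsotope}. First, for the affine reconstruction, I compute
\begin{align*}
(c + \Gamma b) + (G - \Gamma A)\beta &= c + \Gamma b + G\beta - \Gamma(A\beta) \\
&= c + \Gamma b + G\beta - \Gamma b = c + G\beta = x,
\end{align*}
where the second equality uses $A\beta = b$. Second, the $p$-norm conditions $\norm{\beta\idx{J}}_p \leq 1$ hold for every $J \in \idxset$ by hypothesis on $\beta$, and the index set of $\tilde{E}$ is identical to that of $E$, so nothing needs to be checked here. Third, for the linear constraint of $\tilde{E}$, $(A - \Lambda A)\beta = A\beta - \Lambda(A\beta) = b - \Lambda b$, again using $A\beta = b$. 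Combining the three checks gives $x \in \tilde{E}$, proving $E \subseteq \tilde{E}$.

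There is really no obstacle here; the proposition is a direct $p$-norm analogue of \cite[Prop.~5]{scott2016constrained_zonotopes}, and the argument is an algebraic identity that hinges entirely on the feasibility relation $A\beta = b$. Since the $p$-norm constraints depend only on $\beta$ through $\idxset$, and $\idxset$ is left untouched by the transformation, the generalization from the zonotope case is immediate. The only mild subtlety worth flagging in the write-up is that the containment $E \subseteq \tilde{E}$ can be strict: the transformed constraint $(A - \Lambda A)\tilde\beta = b - \Lambda b$ admits feasible $\tilde\beta$ that do not satisfy the original $A\tilde\beta = b$, which is precisely why this operation is useful as a constraint-dualization (and why it is only an over-approximation, not an equality).
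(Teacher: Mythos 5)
Your proof is correct and follows the same route as the paper's: reuse the certifying coefficient $\beta$ and observe that $\Gamma(b - A\beta)$ and $\Lambda(b - A\beta)$ vanish on feasible points, so the affine reconstruction and transformed constraint both hold while the untouched index set preserves the $p$-norm conditions. Your version simply spells out the three checks more explicitly than the paper's two-line argument.
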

\begin{proof}
Let $x \in E$, so $\exists\ \coef \in \R^\ngen$ such that $x = c + G\coef$ and $A\coef = b$.
Then $x = c + G\coef + \Gamma(b - A\coef)$ and $A\coef = b + \Lambda(b - A\coef)$.
\end{proof}

\noindent By choosing $\Lambda$ as a matrix of zeros with a single one on the diagonal, one can zero out a row of $[A,b]$ to eliminate a constraint.
One can also use Prop. \ref{prop:constraint_dualization} to remove a constraint and a generator by choosing $\Gamma$ and $\Lambda$ as in \cite[Sec. 4.2]{scott2016constrained_zonotopes}.
\section{Numerical Examples}\label{sec:numerical_examples}

We now demonstrate properties and uses of ellipsotopes: we illustrate fault detection, assess the speed of the emptiness check, verify collision-avoidance for robot path planning under uncertainty, and assess our order reduction heuristic.
We use MATLAB 2020b to implement all examples\footnote{All code used for figures and examples is available online at \newline \url{https://github.com/Stanford-NavLab/ellipsotopes}.}.

\subsection{Fault Detection}
\new{We implement the set-based fault detection example based on \cite[Section 6]{scott2016constrained_zonotopes}, for which a 2-D, linear nominal model is given but a faulty model (i.e., with slightly different system matrices) is propagated.
A set-based estimator is propagated using the faulty model, and a point containment check is performed at each timestep on samples drawn from the true model.
The goal is to detect the fault (i.e., discrepancy between the nominal and faulty model) in the fewest timesteps.
This is run on a 6-core, 3.4 GHz desktop with 32 GB RAM.

The discrete-time system dynamics follow the form
\begin{align}\begin{split}
    x(t) &= A_i x(t-1) + B_i u(t-1) + D_i w(t-1)\\
    y(t) &= C_i x(t) + v(t)
\end{split}\end{align}
with state $x(t) \in \R^2$, control input $u(t-1)\in\R$, measurement $y(t) \in \R^2$, disturbance $w(t-1) \in \R^2$, and measurement error $v(t)\in\R^2$, and $i=1,2$ distinguishes the nominal and faulty models respectively.
The nominal and faulty system matrices are given by
\begin{subequations}
\begin{align}
    A_1 &= \begin{bmatrix} \Delta_t & 0 \\ 0 & \Delta_t \end{bmatrix} \
    B_1 = \begin{bmatrix} \Delta_t \\ 0 \end{bmatrix} \
    D_1 = \begin{bmatrix} -0.1 & -0.2 \\ -0.2 & 0.1 \end{bmatrix}  \\
    A_2 &= \begin{bmatrix} 2\Delta_t & 0 \\ 0 & 2\Delta_t \end{bmatrix} \
    B_2 = \begin{bmatrix} 2\Delta_t \\ 0 \end{bmatrix} \
    D_2 = \begin{bmatrix} -0.2 & -0.2 \\ -0.1 & 0.1 \end{bmatrix} 
\end{align}
\end{subequations}
where $\Delta_t = 0.001$ s.
Both models have $C = \eye_2$.

When ellipsotopes are used to represent constrained zonotopes, using the same order reduction strategy and random seed, we verify that both representations take an average of 45.2 timesteps (standard deviation 24.5 timsteps) to detect a fault over 5 simulation runs of 100 iterations each, as expected.
The constrained zonotope implementation, via the CORA 2021 toolbox \cite{althoff2015introduction}, takes an average of $18.2$ ms per timestep (std. $11.9$ ms), while our ellipsotope implementation using Cor. \ref{cor:emptiness_check} averages $4.0$ ms per timestep (std. $1.2$ ms).
The implementation difference is solely the index set, which takes negligible time to maintain in practice.

To illustrate the utility of ellipsotopes, we replace the original noise zonotope with an ellipsoid (i.e., a basic ellipsotope), and overapproximate this ellipsoid with zonotopes of $\ngen = 3, 8, 14$ generators, using the method in \cite{gassmann2020scalable_ellipsoid_zono_conversion} as implemented in CORA 2021.
We apply the same order reduction strategy \cite[Sec. 4.2]{scott2016constrained_zonotopes} for ellipsotopes and constrained zonotopes.
In this case, ellipsotopes detect the fault in an average of $26.6$ timesteps (std. $32.4$ timesteps) with an average runtime of $4.24$ ms per timestep (std. $1.73$ ms), whereas constrained zonotopes with $\ngen = 3$ fails to detect the fault, and detects the fault in an average of $59.0$ timesteps (28.8 ms per timestep) for both $\ngen = 8, 14$ due to overapproximation of the set estimate.
Using only ellipsoids, via CORA 2021, we fail to detect the fault, and average $0.101$ s per timestep (std. $0.0094$ s) due to computing MVOEs.}

\subsection{Emptiness Checking}\label{subsec:emptiness_check_experiment}

We now evaluate the speed of checking if an ellipsotope is empty using Cor. \ref{cor:emptiness_check}, using an 8-core, 2.4 GHz laptop with 32 GB RAM.
We apply Cor. \ref{cor:emptiness_check} because we find in practice that solving the feasibility problem \eqref{prog:empty_check_feas_prob} is orders of magnitude faster than solving \eqref{prog:empty_check} from Prop. \ref{prop:empty_and_point_check}.
This speed-up is because there is often a continuum of optimal solutions to \eqref{prog:empty_check_feas_prob}, but only one optimal solution to \eqref{prog:empty_check}.

\new{Our evaluation method is as follows.
First, we generate 10 random 2-ellipsotopes with $\ndim \in \{2, 8, 14\}$ for each $\ngen = 1,2,\cdots,20$ generators (each generator of length no more than $1/\ngen$) and $\ncon = 1$ constraint.
Then, we set $b = \zeros_{\ncon\times 1}$ or $b = 2\ngen\cdot\ones_{\ncon\times 1}$ (to ensure emptiness).
Finally, we solve \eqref{prog:empty_check_feas_prob} using the \texttt{fmincon} SQP algorithm (default tolerances) and an initial guess of $\coef_0 = A\pinv b$; we measure solve time with \texttt{timeit}.}

The results, summarized in Fig.~\ref{fig:emptiness_check_stats}, show that it takes on the order of $10^{-4}$ s to confirm that an ellipsotope is nonempty, but $10^{-2}$ s to identify that an ellipsotope is empty.
This is because the initial guess of $A\pinv b$ is often a feasible solution to \eqref{prog:empty_check_feas_prob}, so the solver can terminate on the first iteration.
\new{Note, the number of generators is the size of the decision variable of \eqref{prog:empty_check_feas_prob}, so we see similar solve time for varying $\ndim$.}

\begin{figure}[ht]
    \centering
    \includegraphics[width=\columnwidth]{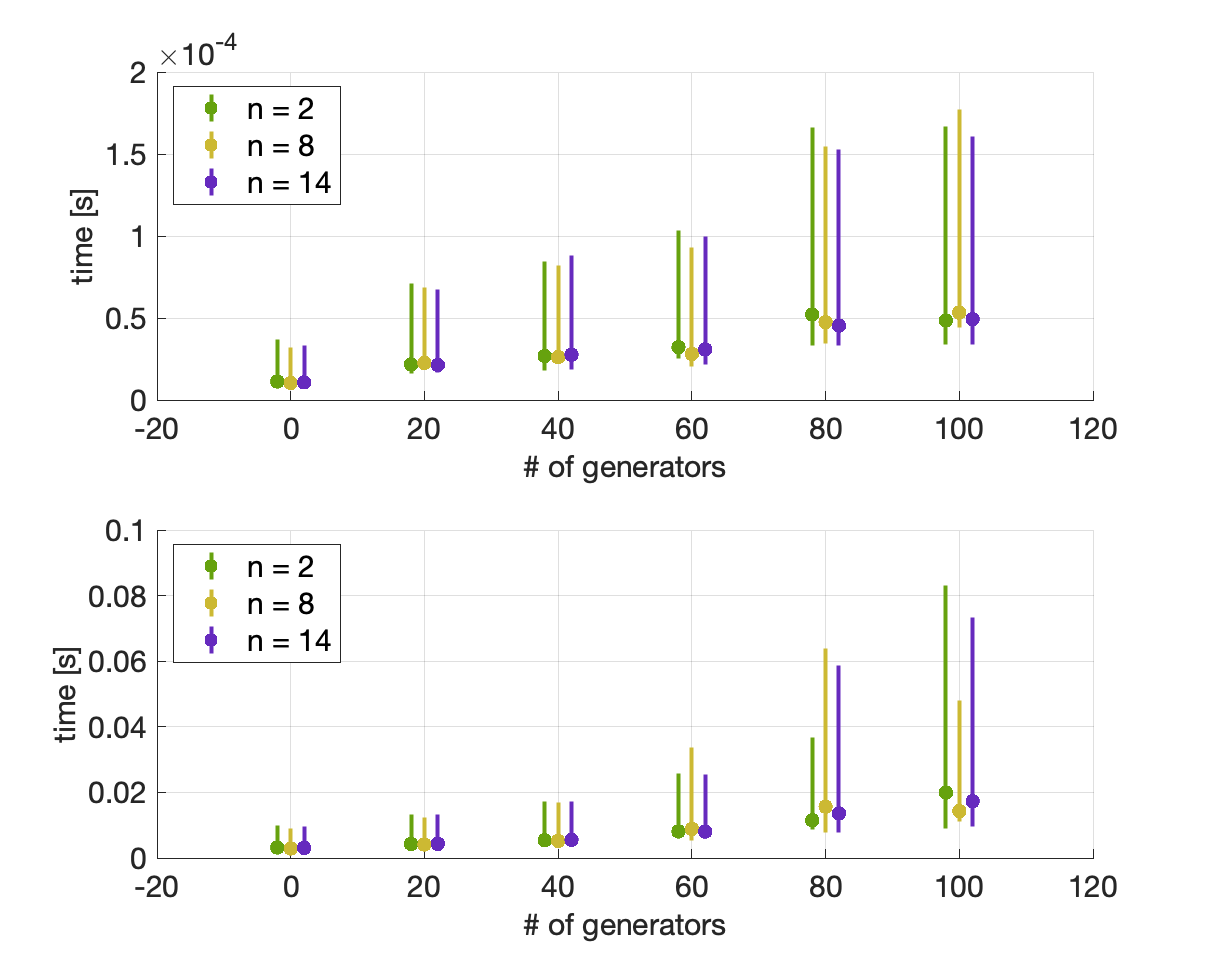}
    \caption{Timing results for solving the ellipsotope emptiness check \eqref{prog:empty_check_feas_prob} as in Section \ref{subsec:emptiness_check_experiment}.
    The top (resp. bottom) subplot shows the emptiness check times for nonempty (resp. empty) ellipsotopes.
    The dots show the mean solve time, and the bars show min/max times.
    Empty ellipsotopes take longer because they require multiple iterations to solve \eqref{prog:empty_check_feas_prob} instead of terminating upon finding a feasible solution.}
    \label{fig:emptiness_check_stats}
    \vspace*{-0.3cm}
\end{figure}

\subsection{Robot Path Verification}\label{subsec:num_ex_robot_path_planning}

We now present a path verification example in which ellipsotopes are used to represent the reachable set of the combined volume of a robot's body and state uncertainty.
This illustrates the practicality of the ellipsotope Minkowski sum, intersection, and emptiness check.
\new{To demonstrate that ellipsotopes can provide tighter reachable sets than zonotopes or ellipsoids, we also compute the reachable sets for the same trajectory using both zonotopes and ellipsoids via CORA 2021 \cite{althoff2015introduction}.}
We use a 6-core, 3.4 GHz desktop with 32 GB RAM.

\subsubsection{System Dynamics and Measurements}

We consider a robot with a box-shaped rigid body with width $w\rob$ and length $l\rob$, and represent it with an indexed 2-ellipsotope:
\begin{align}\begin{split}\label{eq:robot_body}
    E\rob = \ellifn{2}{\zeros_{2\times 1},\ \tfrac{1}{2}\diag{w\rob,l\rob},\ \{\{1\},\{2\}\}},
\end{split}\end{align}
We model the system with discrete-time, nonlinear dynamics and measurements.
In particular we consider a Dubins car model with state $x(t) = [x_1(t), x_2(t), \theta(t)]\trans$, input $u(t) = [v(t), \omega(t)]\trans$ and center-of-mass equations of motion
\begin{subequations}
\begin{align}
    x_1(t) &= x_1(t-1) + v(t-1)\cos(\theta(t-1))\Delta_t + w_1(t), \\
    x_2(t) &= x_2(t-1) + v(t-1)\sin(\theta(t-1))\Delta_t + w_2(t), \\
    \theta(t) &= \theta(t-1) + \omega(t-1)\Delta_t + w_3(t),
\end{align}
\end{subequations}
where $p(t) = [x_1(t), x_2(t)]\trans$ is the robot's center-of-mass position and $\theta(t)$ is its heading at time $t \in \N$.
The process noise is $w(t) \sim \pN(\zeros,Q)$ where $Q \in \R^{3\times 3}$ and $Q \succ 0$.
The control inputs are longitudinal speed $v(t)$ and yaw rate $\omega(t)$.
Time is discretized by $\Delta_t = 0.1$ s.

The robot's measurements consist of 4 ranges to beacons placed at fixed, known locations, as well as a heading measurement, all with additive Gaussian noise.
Range measurements that are taken when $x_1(t) < 30$ have noise variance of \SI{0.4}{\meter}, while measurements taken when $x_1(t) \ge 30$ (shown shaded in light red in Fig.~\ref{fig:reach_comparison}) have a higher variance of \SI{10.0}{\meter}.

\subsubsection{Reachability under Position Uncertainty}

The robot tracks a nominal trajectory $\check{x}$ with a linear state estimator and controller, as in \cite{bry2011rapidly} and \cite{shetty2020_stoch_reach}. 
At time $t$ the state estimator provides an uncertain robot state as a Gaussian distribution $\pN(\mu(t),\Sigma(t))$.
We assume the position and heading covariance are decoupled, such that we can decompose $\mu(t)$ and $\Sigma(t)$ into position and heading components $\mu(t) = [\mu_p(t), \mu_\theta(t)]\trans$ and $\Sigma(t) = \diag{\Sigma_p(t), \Sigma_\theta(t)}$.
Now consider the $\alpha$-probability confidence level set of the robot's uncertain position, $E\uncrt$, for which $P(p(t)\in E\uncrt) \ge \alpha$.
Letting $\epsilon = -2 \log(1-\alpha)$, we represent $E\uncrt$ as an ellipse, $E\uncrt = \{x + \check{p}(t)\ |\ x\trans(\epsilon\Sigma_p(t))^{-1}x \le 1\}$.
Then, with Lem. \ref{lem:ellipsotopes_are_ellipsoids}, we represent this ellipse as a 2-ellipsotope $E\uncrt(t) = \ellifn{2}{\check{p}(t),(\epsilon\Sigma_p(t))^{1/2}}$.
Given some initial state estimation covariance $\Sigma_0$, we propagate state uncertainty along the nominal trajectory according to \cite[Equations (17)-(21) and (33)]{bry2011rapidly}, and obtain the associated $\alpha$-confidence ellipses that enclose the center-of-mass trajectory of the robot, under uncertainty due to noisy dynamics and measurements. 

\subsubsection{Handling Robot Body and Heading Uncertainty}

To account for the robot's body, we cannot simply Minkowski sum the $E\rob$ ellipsotope with the $E\uncrt$ ellipsotope, because we must account for heading uncertainty.
We do so by first taking the $\alpha$-confidence interval, $(\check{\theta}-\Delta_\theta, \check{\theta}+\Delta_\theta)$, of the distribution $\pN(\hat{\theta},\Sigma_\theta)$ of heading ($\theta$) estimates.
\new{Next, to overbound the area swept out by the robot's body over this range of angles, we create an ellipsotope as the intersection of the circumscribing circle of the robot's body with four halfspaces, shown in Fig. \ref{fig:heading_uncertainty} as cyan dashed lines, found analytically using $\check{\theta}\pm\Delta_\theta$.}
Then, for each timestep of the trajectory, we Minkowski sum this ellipsotope with the center-of-mass confidence ellipse from position uncertainty propagation to obtain our final reachable set.
\begin{figure}[t]
    \centering
    \includegraphics[width=0.8\columnwidth]{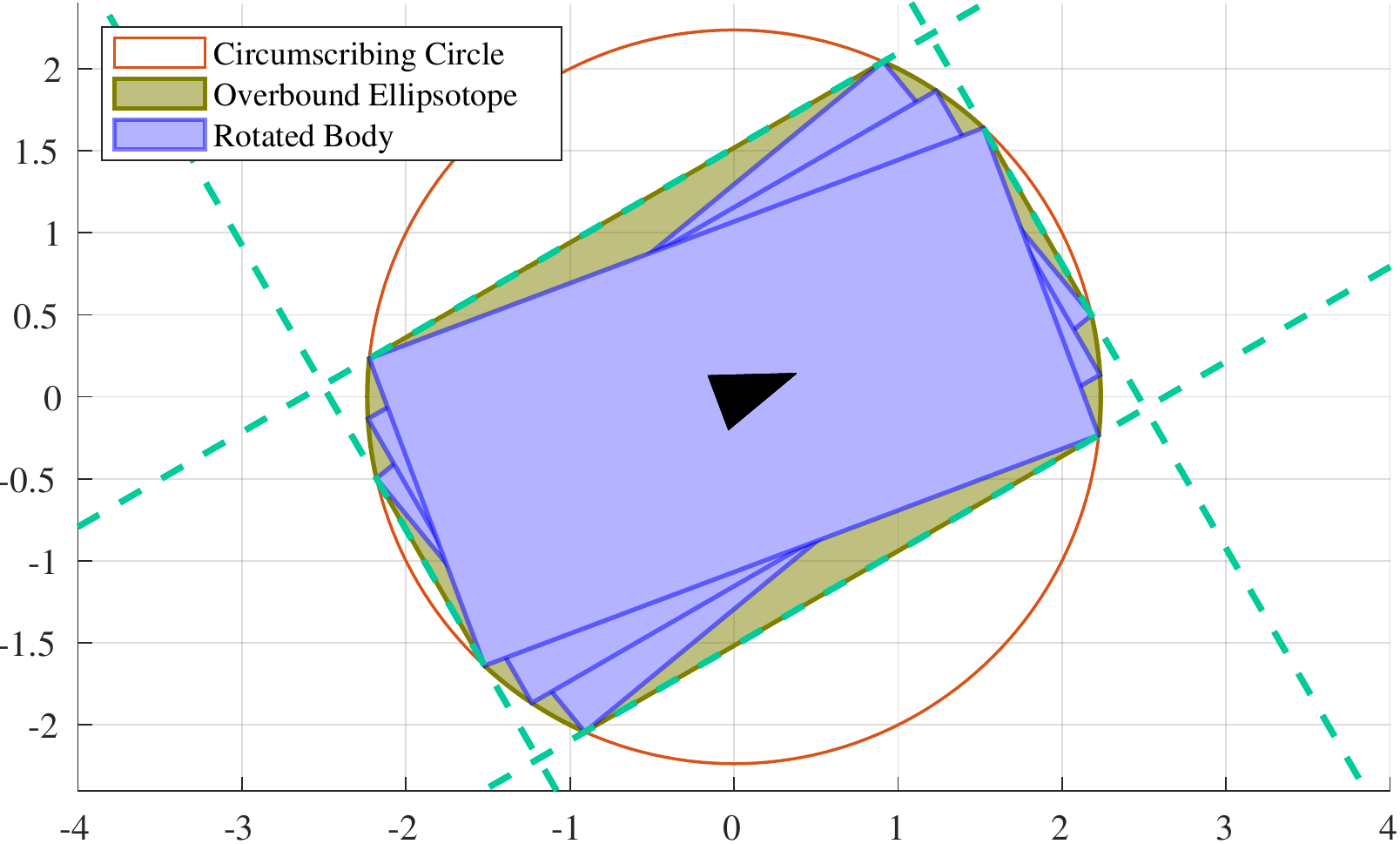}
    \caption{Construction of an ellipsotope (dark yellow) which encloses a rectangular robot body (as in Sec.~\ref{subsec:num_ex_robot_path_planning}) for an interval of headings (three possible rotations of the body shown in blue, with heading as a black arrow).}
    \label{fig:heading_uncertainty}
    \vspace*{-0.2cm}
\end{figure}

\subsubsection{Evaluation Metrics}
For each of the 127 timesteps of the nominal trajectory, we compute the intersection between the reachable set and each obstacle.
We then solve the emptiness check in Cor.~\ref{cor:emptiness_check} to assess if the reachable set is in collision.
To collision check the comparison ellipsoid and zonotope reachable sets, we use CORA \cite{althoff2015introduction}.

We compute the total area of each 2-D reachable set to assess conservativeness.
For ellipsotopes, we approximate area by sampling points from the boundary, constructing a polygon from the sampled points, then computing the area of the polygon.
For zonotopes and ellipsoids we use the CORA built-in functions for computing area.

\subsubsection{Results and Discussion}

The ellipsoid, zonotope, and ellipsotope reachable sets are shown in Fig. \ref{fig:reach_comparison}.
The ellipsotope reachable set is computed in 44.8 \si{ms} and collision checked in 1.7262 s.
We consider a 12.7 s long trajectory, so we can validate it with ellipsotopes faster than real time.
The zonotope reachable set is collision checked in 1.0832 s and the ellipsoid reachable set in 1.3676 s.
The zonotope reachable set has an area of 152.98 \si{m^2}, the ellipsoids 178.06 \si{m^2}, and the ellipsotopes 111.22 \si{m^2}.
Thus, ellipsotopes maintain comparable collision checking speed but provide a tighter reachable set.
Also note, this example is an improvement over \cite{shetty2020_stoch_reach}, since we \textit{exactly represent} the confidence bounds of the uncertain position and heading states as ellipsotopes, instead of overapproximating the bounds with zonotopes.

\begin{figure}[t]
    \centering
    \includegraphics[width=\columnwidth]{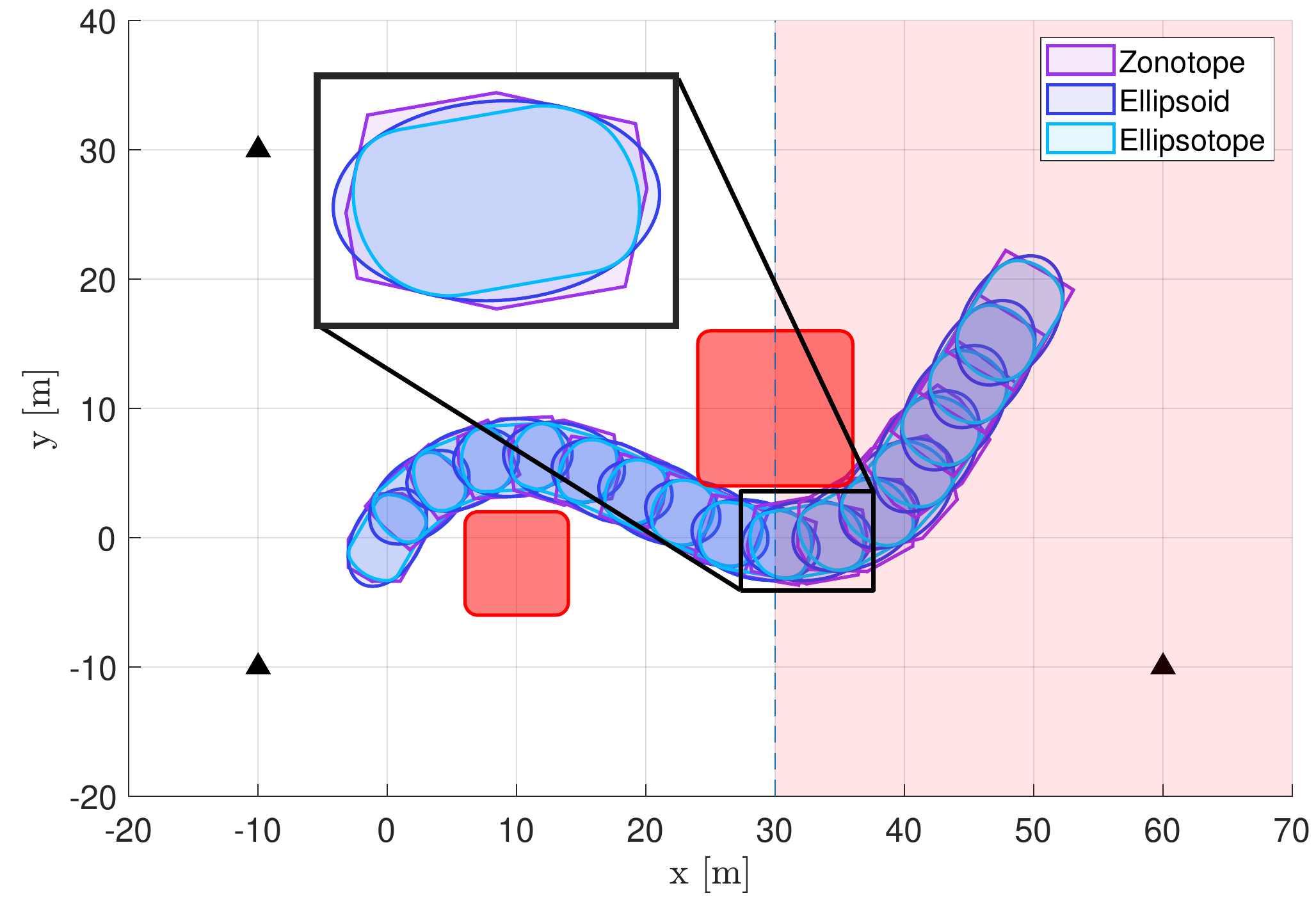}
    \caption{Comparison of reachable sets represented by zonotopes, ellipsoids, and ellipsotopes.
    The ellipsotopes more tightly bound the robot's body, as seen in the zoom window. }
    \label{fig:reach_comparison}
    \vspace*{-0.3cm}
\end{figure}

\subsection{Order Reduction Heuristic}\label{subsec:order_reduction_examples}

\new{
Finally, we assess the heuristic in Sec. \ref{subsubsec:choosing_MVOE_ellipsoids} for order reduction of a 2-ellipsotope in $\R^n$.
We use an 8-core, 2.4 GHz laptop with 32 GB RAM.
Note, our code base has examples of the other techniques from Sec. \ref{sec:order_reduction}.
For $\ndim \in \{2,8,14,50,100\}$, we create $50$ random 2-ellipsotopes as a Minkowski sum of 6 ellipsoids with random generator matrices $G \in \R^{\ndim\times\ndim}$, where each element of $G$ is drawn uniformly from $\big[\tfrac{-1}{\sqrt{\ndim}},\tfrac{1}{\sqrt{\ndim}}\big]$.
For each ellipsotope and every possible pair of component ellipsoids, we compute the true MVOE as per \cite{halder2018parameterized_ellipsoid_approx} and our heuristic value in \eqref{eq:order_reduc_heur_2_etope}.
Across all $\ndim$, despite the wide variety of generator matrices, our heuristic correlates strongly with the volume of the MVOE, but computes nearly an order of magnitude faster across all dimensions.
Note, for 50- and 100-D, $r^2 \approx 1.000$ and the mean heuristic evaluation time is on the order of $10^{-3}$ s.
Fig. \ref{fig:order_reduction_MVOE_heur_14D} shows $\ndim = 14$.
}

\begin{figure}[ht]
    \centering
    \includegraphics[width=0.8\columnwidth]{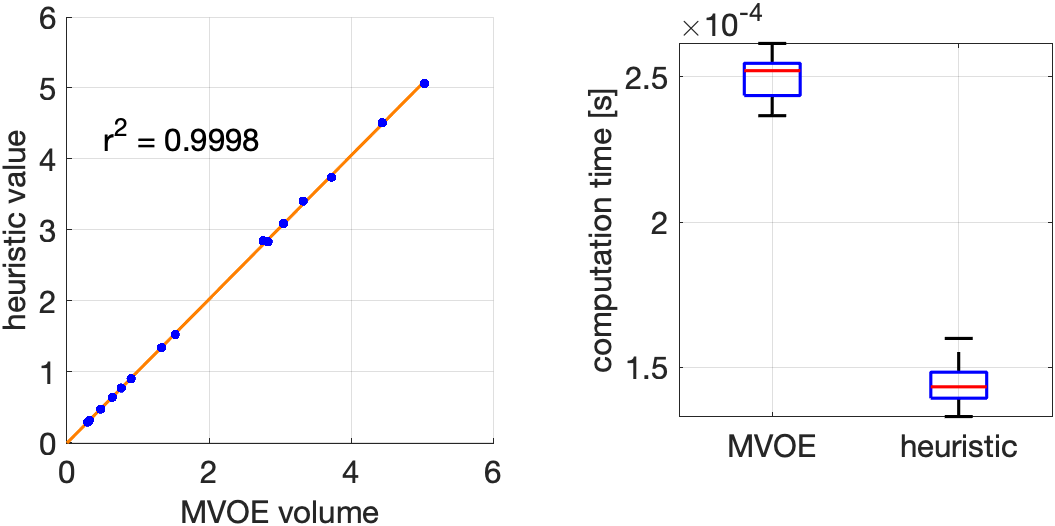}
    \caption{MVOE volume vs. heuristic \eqref{eq:order_reduc_heur_2_etope} for 14-D ellipsotopes.
    The heuristic strongly correlates with the true MVOE volume but computes faster, as per the box-and-whisker plot.}
    \label{fig:order_reduction_MVOE_heur_14D}
    \vspace*{-0.3cm}
\end{figure}
\section{Conclusion}\label{sec:conclusion}

This work introduced ellipsotopes, a novel set representation created by generalizing the $\infty$-norm that defines zonotopes and constrained zonotopes.
We showed that this set representation is closed under the operations critical to reachability analysis and fault detection: affine transformations, Minkowski sum, and intersection.
Since ellipsotopes can grow in complexity similar to zonotopes when used for reachability analysis, we discussed several order reduction strategies.
We also demonstrated the utility of ellipsotopes via numerical examples, and illustrated their importance via a literature comparison to other set representations.
For future work, we will formalize a stochastic variant of ellipsotopes and explore further applications in reachability and fault detection.


\renewcommand{\bibfont}{\normalfont\footnotesize}
{\renewcommand{\markboth}[2]{}
\printbibliography}

\appendix
\subsection{Ellipsotope Visualization}

Zonotopes can be difficult to visualize due to the exponential number of faces as a function of the number of generators.
Ellipsotopes are a further challenge because we are now concerned with plotting an affine image of, in the most general sense, the intersection of a hyperplane with a Cartesian product of high-dimensional $p$-norm balls.
We now describe several ``tricks'' to facilitate visualization.

For this appendix, consider $c \in \R^n$, $G \in \R^{n\times m}$, $A \in \R^{\ncon\times \ngen}$, and $b \in \R^\ncon$; and, let $\idxset$ be a valid index set.
For a set $S \subset \R^n$, we denote its boundary by $\bdry{S}$.

\subsubsection{High-dimensional Ball Boundaries}

The first trick that we leverage is that, since an ellipsotope is an affine image of a convex shape in high dimensions, the boundary of the image is a subset of the image of the boundary.
Consequently, our strategy for visualization is to generate points on the \emph{boundary} of the feasible generator coefficients, then map them through the generator matrix.
While many of these points may not lay on the boundary of the ellipsotope, the convex hull of these points is a visually-acceptable approximation of the ellipsotope in practice.

To proceed, we first need the following lemma.

\begin{lem}\label{lem:bdry_under_affine_image}
Suppose $C \in \R^n$ is a compact, convex set.
Let $M: \R^n \to \R^m$, with $m < n$, be a surjective linear map.
Suppose $x \in \bdry{(MC)}$.
Then there exists $y \in \bdry{C}$ such that $x = My$.
\end{lem}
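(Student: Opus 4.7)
The plan is to use the fact that a surjective linear map between finite-dimensional Euclidean spaces is an open map, combined with a proof by contradiction.

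First, I would observe that $MC \subset \R^m$ is compact (as the continuous image of a compact set), so $\bdry{(MC)}$ consists of points in $MC$ that are not in its interior. Since $x \in MC$, the preimage set
\begin{align*}
    L = M^{-1}(\{x\}) \cap C
\end{align*}
is nonempty. It is also compact and convex, being the intersection of the closed affine subspace $M^{-1}(\{x\})$ with $C$.

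The key claim is that $L \cap \bdry{C} \neq \emptyset$, from which the lemma follows by selecting any $y$ in this intersection. I would prove this claim by contradiction: assume $L \subseteq \regtext{int}(C)$, and pick any $y_0 \in L$. Then there exists $r > 0$ such that the open ball $B(y_0, r) \subset C$. Since $M$ is a surjective linear map between finite-dimensional Euclidean spaces, it is an open map, so $M(B(y_0, r)) \subset \R^m$ is open and contains $x = My_0$. Because $M(B(y_0, r)) \subseteq MC$, this places $x$ in the interior of $MC$, contradicting $x \in \bdry{(MC)}$.

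The only subtle step is invoking the open mapping property of $M$; in finite dimensions this follows from elementary linear algebra (a surjective linear map factors through an isomorphism onto a complement of its kernel), so I would cite it rather than prove it in detail. I expect no significant obstacle beyond clean bookkeeping of the preimage structure. With the contradiction established, any $y \in L \cap \bdry{C}$ satisfies both $y \in \bdry{C}$ and $My = x$, completing the proof.
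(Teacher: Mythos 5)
Your proof is correct, but it takes a genuinely different route from the paper's. The paper's argument is a one-line observation: since $m < n$, the preimage $M\inv x$ is an affine subspace of positive dimension $n-m$, hence unbounded; it meets $C$ (because $x \in MC$) but cannot be contained in the compact set $C$, so by connectedness it must cross $\bdry{C}$. Your argument instead proceeds by contradiction via the open mapping property of surjective linear maps: if some preimage point of $x$ lay in $\regtext{int}(C)$, the image of a small ball around it would be an open neighborhood of $x$ contained in $MC$, contradicting $x \in \bdry{(MC)}$. The two arguments lean on complementary hypotheses: the paper's uses $m<n$ essentially (to get an unbounded fiber) but never uses $x \in \bdry{(MC)}$ --- it in fact shows that \emph{every} point of $MC$ is the image of some boundary point of $C$; yours uses $x \in \bdry{(MC)}$ essentially but never uses $m<n$, so it remains valid even for bijective $M$, and it yields the stronger conclusion that every preimage of $x$ in $C$ lies on $\bdry{C}$. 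Either argument suffices for the lemma as stated (and neither actually requires convexity of $C$), so there is no gap --- just a different, equally clean mechanism.
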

\begin{proof}
The preimage $M\inv x$ is a linear subspace of $\R^n$ that intersects $C$, and therefore intersects $\bdry{C}$.
\end{proof}

Now, to approximate the boundary of a high-dimensional ball, suppose that $Y \subset \R^n$ is a finite set of sampled points with $\card{Y} = \nplot \in \N$.
To map $Y \to \bdry{\ballfn{p}{n}}$, let $\tobdryfn{p}{\cdot}: \R^n \to \R^n$ for which
\begin{align}
    \tobdryfn{p}{y} = \frac{y\idx{i}}{\norm{y\idx{i}}_p}.
\end{align}
Suppose $E = \ellifn{p}{c,G}$.
Let $X = c + G\cdot\tobdryfn{p}{Y}$.
Then, applying Lemma \ref{lem:bdry_under_affine_image}, $\convhull{X}$ approximates $\bdry{E}$.

\subsubsection{Ball Product and Affine Subspace Ray Tracing}\label{app:subsec:ball_product_affine_subsp_int}

Now we generalize the previous approach to generate points on the boundary of the intersection of the ball product $\ballprodfn{\idxset}$ and the affine subspace $\hpfn{A,b}$.
Let $E = \ellifn{p}{c,G,A,b,\idxset}$.
To plot this ellipsotope, our goal is to first pick $\nplot \in \N$ coefficients $\coef_i$, $i = 1,\cdots,\nplot$, such that
\begin{align}
    \coef_i \in \bdry{\ballprodfn{\idxset} \cap \hpfn{A,b}}.
\end{align}
In other words, these coefficients obey the constraints
\begin{subequations}\label{cons:point_on_ball_prod_lin_space_bdry}
\begin{align}
    \forall~J \in \idxset,~\norm{\coef_i\idx{J}}_p &\leq 1,\\
    \exists~K \in \idxset\ \regtext{s.t.}\ \norm{\coef_i\idx{K}}_p &= 1,\ \regtext{and}\\
    A\coef = b.
\end{align}
\end{subequations}
We can then approximate the ellipsotope as
\begin{align}
    E \approx \convhull{\{c + G\coef_i\}_{i=1}^\nplot}.
\end{align}
We generate these points by tracing rays outwards in $\R^\ngen$ from a point inside $\ballprodfn{\idxset}$ until they contact the boundary of $\ballprodfn{\idxset}\cap\hpfn{A,b}$.

To generate a single $\coef_i$, first let ${u_i} \in \ker{A}$ be a random unit vector in the nullspace of $A$.
Let ${\coef_0} \in \ballprodfn{\idxset}$, which can be found by applying Cor. \ref{cor:emptiness_check}; note that $\coef_0 = A\pinv b$ is often such a point.
Let $J \in \idxset$.
We then solve
\begin{align}
    \norm{\alpha_i{u_i}\idx{J} + {\coef_0}\idx{J}}_p = 1
\end{align}
for $\alpha_i \in \R$ and set $\coef_i = \alpha_i{u_i} + {\coef_0}$ as a point that is guaranteed to obey \eqref{cons:point_on_ball_prod_lin_space_bdry}.
Let $\varphi(\alpha_i,{u_i},{\coef_0},J) = \norm{\alpha_i{u_i}\idx{J} + {\coef_0}\idx{J}}_p^p - 1^p$.
Notice that, for any $J \in \idxset$,
\begin{align}
    \varphi(\alpha_i,{u_i},{\coef_0},J) &= \left(\sum_{j \in J}\left((\alpha_i{u_i}\idx{j})^p + ({\coef_0}\idx{j})^p\right)\right) - 1\\
    \begin{split}
        &= \binom{p}{0}\alpha_i^p\sum_{j \in J}({u_i}\idx{j})^{p}({\coef_0}\idx{j})^0 +\\
        &\quad+\binom{p}{1}\alpha_i^{p-1}\sum_{j \in J}({u_i}\idx{j})^{p-1}({\coef_0}\idx{j})^{1}~+~\cdots\\
        &\quad\cdots+\binom{p}{p}\alpha_i^0\sum_{j \in J}({u_i}\idx{j})^{0}({\coef_0}\idx{j})^p - 1,
    \end{split}
\end{align}
where we have applied the binomial theorem to expand the coefficients.
Since $\varphi(\cdot,{u_i},t,J)$ is a univariate polynomial in $\alpha_i$, we can solve for $\alpha_i$ efficiently.
Critically, since ${\coef_0} \in \ballprodfn{\idxset}$, the direction $u_i$ points ``outward'' towards the boundary, so the smallest solution $\alpha_i$ is a point on the boundary; in other words, $\coef_i$ obeys \eqref{cons:point_on_ball_prod_lin_space_bdry}.

\subsubsection{Ray Tracing}

We find in practice that plotting 2-D ellipsotopes with more than 5 generators with the above methods is computationaly expensive, taking several seconds to generate a single plot.
Furthermore, the above methods result in many unused points (that is, points on the boundary of the feasible set that are mapped to the interior of the ellipsotope, and therefore not used for plotting).
To address this, we pose a convex program to identify points on the boundary of the ellipsotope directly in its workspace.
In particular, we maximize the length of a ray extending from a point in the ellipsotope in an arbitrary direction while constraining it to lie within the ellipsotope.

We set up to perform ray tracing as follows.
Let $E = \ellifn{p}{c,G,A,B,\idxset} \subset \R^n$ be an ellipsotope with $\ngen$ generators.
Consider a ray
\begin{align}
    \rayfn{x,g} = \left\{x + \lambda g\ |\ \lambda \geq 0 \right\} \subset \R^n,
\end{align}
where $g \in \R^n$ is arbitrary and $x \in E$.
We find $x$ as any feasible point in the ellipsotope by applying the strategy above in Appendix \ref{app:subsec:ball_product_affine_subsp_int} to find a feasible coefficient $\coef$, then setting $x = c + G\coef$.
Note, we cannot always set $x = c$, because it is possible that $c \not\in E$, which occurs when $\zeros \not\in \hpfn{A,b}$.

Finally, to perform ray tracing, we solve
\begin{subequations}\label{prog:ray_tracing_etope_bdry}
\begin{align}
    \max_{\lambda \geq 0,\ \coef \in \R^\ngen}\quad&\lambda \\
    \st\quad &\norm{\coef\idx{J}}_p \leq 1~\forall~J \in \idxset,\\
        &A\coef = b,\ \regtext{and}\\
        &c + G\coef = x + \lambda g,
\end{align}
\end{subequations}
which is convex and always feasible if $E \neq \emptyset$.
By solving \eqref{prog:ray_tracing_etope_bdry} for a variety of $g$, we can sample the boundary of the ellipsotope.
In practice,  we sample $g$ uniformly from the boundary of the 2-D or 3-D unit sphere and solve \eqref{prog:ray_tracing_etope_bdry} once for each sample.
\subsection{Minimum Volume Outer Ellipsoids}\label{app:MVOE}

We use the following methods to compute minimum volume outer ellipsoids (MVOEs) for the Minkowski sum of ellipsoids and for zonotopes.

\subsubsection{MVOE of Ellipsoid Minkowski Sum}
We apply the method in \cite{halder2018parameterized_ellipsoid_approx}.
Consider the pair of ellipsoids $E_1 = \ellipsoidfn{c_1,Q_1}$ and $E_2 = \ellipsoidfn{c_2,Q_2}$ in $\R^\ndim$.
Let $\lambda = \eig{Q_1Q_2\inv} \in \R^\ndim$.
Let $\zeta_0 \in \R$ and consider the fixed-point iteration
\begin{align}\label{eq:MVOE_iteration}
    \zeta_{n+1} =  \left(\frac
        {\sum_{i=1}^\ndim \frac{1}{1+\zeta_n\lambda\idx{i}}}
        {\sum_{i=1}^\ndim \frac{\lambda\idx{i}}{1+\zeta_n\lambda\idx{i}}}
    \right)^{\frac{1}{2}}.
\end{align}
Define $\zeta$ as the limit of \eqref{eq:MVOE_iteration} as $n \to \infty$.
Then
\begin{align}
    E_1 \oplus E_2 \subseteq \ellipsoidfn{c_1 + c_2,\ Q_\oplus},
\end{align}
where
\begin{align}
    Q_\oplus = \left((1 + \tfrac{1}{\zeta})Q_1\inv + (1 + \zeta)Q_2\inv\right)\inv.
\end{align}
See \cite[Sec. III-B and Sec. IV-B]{halder2018parameterized_ellipsoid_approx} for derivation and analysis of the above algorithm, and in particular \cite[Thm. 1]{halder2018parameterized_ellipsoid_approx} for a proof of convergence.

We find $\zeta$ numerically by applying \cite[(17)]{halder2018parameterized_ellipsoid_approx}.
First, we set a tolerance $\varepsilon \approx 10^{-10}$.
Then, we iterate \eqref{eq:MVOE_iteration} starting from $\zeta_0 = 0$ until
\begin{align}
    \sum_{i=1}^\ndim \frac{1 - \zeta^2\lambda\idx{i}}{1 + \zeta\lambda\idx{i}} \leq \varepsilon.
\end{align}

\subsubsection{Overapproximating the MVOE of a Zonotope}

We apply the method in \cite[Thm. 1]{gassmann2020scalable_ellipsoid_zono_conversion}.
Let $Z = \zonofn{c,G} \subset \R^\ndim$ be a zonotope with $\ngen \in \N$ generators.
To overapproximate the MVOE, we first solve an SDP \cite[Lem. 3]{gassmann2020scalable_ellipsoid_zono_conversion}:
\begin{subequations}
\begin{align}
    r = \min_{\lambda \geq 0,\ \lambda \in \R^\ngen}\quad&\ones_{\ngen\times 1}\trans\lambda \\
    \regtext{s.t.}\quad\quad&\diag{\lambda} - G_0\trans G_0 \succeq 0,
\end{align}
\end{subequations}
where
\begin{align}
    G_0 = E_0^{-\tfrac{1}{2}}\quad\regtext{and}\quad E_0 = \ngen G G\trans.
\end{align}
Then an outer approximation of the MVOE is given by
\begin{align}
    Z \subset \ellipsoidfn{c,rE_0}.
\end{align}
The MVOE approximation can be made tighter by applying \cite[Lem. 3]{malik2006gap_for_MVOE_of_zonotope} in the case when $\ndim \approx \ngen$.

\subsection{Constrained Polynomial Zonotopes}



We can show that every ellipsotope is a constrained polynomial zonotope (CPZ) similar to showing that every ellipsoid is a CPZ \cite{kochdumper2020cons_poly_zono}.
First, we introduce polynomial notation: for a vector $v \in \R^n$ and an integer matrix $M \in \N^{m\times n}$, let $v^M \in \R^m$ denote a vector for which
\begin{align}
    v^M\idx{j} = \prod_{i = 1}^n (v\idx{i})^{M\idx{j,i}}
\end{align}
with $j = 1,\cdots, m$.
Now, given $c \in \R^n$, $G \in \R^{n\times m}$, $X \in \N^{m\times m}$, $A \in \R^{k\times m}$, $b \in \R^k$, and $D \in \N^{k \times m}$, a CPZ is the set
\begin{align}\begin{split}
    \conpolyzonofn{c,G,X,A,b,D} = \bigg\{&c + G\beta^X\ |\ 
        \norm{\beta}_\infty \leq 1\\
        &\regtext{and}\ A\beta^D - b = 0    
    \bigg\}.
\end{split}\end{align}
Now, consider the basic case of $E = \ellifn{p}{c,G} = \left\{c + G\coef\ |\ \norm{\coef}_p \leq 1 \right\}$ with $m$ generators.
Add a slack coefficient $\slackvar \in \R$ to write
\begin{align}
      \begin{split}
    E  = \Big\{c + [G, \zeros_{n\times 1}](\coef, \slackvar)\ |\ 
        &\norm{\coef}_p^p + 0.5\slackvar = 0.5,\ \regtext{and}\\
        &\norm{(\coef,\slackvar)}_\infty \leq 1 \Big\}.
      \end{split}
\end{align}
Then, it follows that
\begin{subequations}
\begin{align}
    E &= \conpolyzonofn{c,[G,\zeros],X,A,b,D},\ \regtext{with} \\
        X &= \eye_m,\ 
            A = [\ones_{1\times m}, 0.5],\\
        b &= 0.5,\ \regtext{and}\ 
            D = [p\cdot\ones_{1\times m},\ 1]\trans.
\end{align}
\end{subequations}
Adding linear constraints or an index set on the coefficients of $E$ necessitates only minor changes to $A$, $b$, and $D$ in the CPZ formulation.
\subsection{Convex Hulls}

We adapt \cite[Theorem 5]{raghuraman2020set_ops_conzono} to overapproximate the ellipsotope convex hull.
The convex hull of $A \cup B$ is $\convhull{A\cup B} = \left\{\lambda a + (1-\lambda) b\ |\ \lambda \in [0,1],\ a \in A,\ b \in B \right\}$.

\begin{prop}[Convex Hull Overapproximation]\label{prop:convhull}
Consider $E_1 = \ellifn{p}{c_1,G_1,A_1,b_1,\idxset_1}$ and $E_2 = \ellifn{p}{c_2,G_2,A_2,b_2,\idxset_1}$, with $c_1, c_2 \in \R^\ndim$, $G_1 \in \R^{\ndim\times \ngen_1}$, $G_2 \in \R^{\ndim\times \ngen_2}$, $A_1 \in \R^{\ncon_1\times \ngen_1}$, $A_2 \in \R^{\ncon_2\times \ngen_2}$, $b_1 \in \R^{\ncon_1}$, and $b_2 \in \R^{\ncon_2}$. Let $m_3 = m_1 + m_2$.
The convex hull $\convhull{E_1 \cup E_2}$ is overapproximated by the ellipsotope $E\ch = \ellifn{p}{c\ch,G\ch,A\ch,b\ch,\idxset\ch}$ with
\begin{subequations}\label{eq:convhull}
\begin{align}
    c\ch &= \tfrac{1}{2}(c_1 + c_2), \\
    G\ch &= \left[G_1, G_2, \tfrac{1}{2}(c_1 - c_2), 0\right] \in \R^{\ndim \times (3m_3 + 1)}, \\
    A\ch &= \begin{bmatrix*}[r]
        	A_1 & \zeros & -\frac{1}{2}b_1 & 0 \\
        	0 & A_2 & \frac{1}{2}b_2 & 0 \\
        	A_{3,1} & A_{3,2} & A_{3,0} & \eye 
        	\end{bmatrix*} \in \R^{(\ncon_1+\ncon_2+2m_3) \times (3m_3+1)}, \\
    b\ch &= \begin{bmatrix*}[r]
        	\frac{1}{2}b_1 \\ \frac{1}{2}b_2 \\ -\frac{1}{2}1
        	\end{bmatrix*} \in \R^{\ncon_1+\ncon_2+2m_3}, \\
    A_{3,1} &= \begin{bmatrix*}[r]
               \eye \\ -\eye \\ \zeros \\ \zeros
               \end{bmatrix*}, \quad
    A_{3,2} = \begin{bmatrix*}[r]
               \zeros \\ \zeros \\ \eye \\ -\eye 
               \end{bmatrix*}, \quad
    A_{3,0} = \begin{bmatrix*}[r]
               -\frac{1}{2}\ones \\ -\frac{1}{2}\ones \\ \frac{1}{2}\ones \\ \frac{1}{2}\ones
               \end{bmatrix*}, \\
    \idxset\ch &= \left\{\idxset_1, \idxset_2 + m_1, \{m_3+1\}, \{m_3+2\},\cdots,\{3m_3+1\} \right\},
\end{align}
\end{subequations}
(i.e. $\convhull{E_1 \cup E_2} \subseteq E\ch$).
\end{prop}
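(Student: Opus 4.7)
The plan is to follow the standard constrained zonotope convex hull construction of Raghuraman and Koeln \cite{raghuraman2020set_ops_conzono}, adapting it to the ellipsotope setting by showing that the additional $p$-norm index set constraints are automatically satisfied by the chosen reparameterization. Concretely, take any $x \in \convhull{E_1 \cup E_2}$, so there exist $\lambda \in [0,1]$, $a \in E_1$, $b \in E_2$ with $x = \lambda a + (1-\lambda) b$. Write $a = c_1 + G_1 \alpha_1$ and $b = c_2 + G_2 \alpha_2$ with $A_i \alpha_i = b_i$ and $\|\alpha_i\idx{J}\|_p \leq 1$ for $J \in \idxset_i$. Introduce the affine reparameterization $\beta_0 = 2\lambda - 1 \in [-1,1]$, so $\lambda = (1+\beta_0)/2$ and $1-\lambda = (1-\beta_0)/2$, and define $\beta_1 = \lambda \alpha_1$ and $\beta_2 = (1-\lambda)\alpha_2$.

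A short calculation then yields
\begin{equation*}
    x = c\ch + G_1 \beta_1 + G_2 \beta_2 + \tfrac{1}{2}(c_1 - c_2)\beta_0,
\end{equation*}
which matches $c\ch + G\ch \beta$ after appending slack entries. The first two row blocks of $A\ch \beta = b\ch$ reduce respectively to $A_1 \beta_1 = \lambda b_1$ and $A_2 \beta_2 = (1-\lambda) b_2$, both of which hold by construction. For the four row blocks of the third block, I would solve directly for the slack coefficients $s$ to verify the equation holds; e.g., the first block gives $s\idx{i} = -\beta_1\idx{i} - (1-\lambda)$, and using $|\beta_1\idx{i}| \leq \lambda$ (which follows from $\|\alpha_1\idx{J}\|_p \leq 1$ implying $|\alpha_1\idx{i}| \leq 1$ for any $p \geq 1$) one checks $s\idx{i} \in [-1, 2\lambda - 1] \subset [-1,1]$. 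The three remaining blocks are handled symmetrically.

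It remains to verify the index set constraints. For $J \in \idxset_1$ one has $\|\beta_1\idx{J}\|_p = \lambda \|\alpha_1\idx{J}\|_p \leq \lambda \leq 1$, and analogously for $\idxset_2 + m_1$. The singletons covering $\beta_0$ and each slack coordinate are exactly the scalar constraints $|\beta_0| \leq 1$ and $|s\idx{i}| \leq 1$ already confirmed. Hence the constructed $\beta$ is feasible for $E\ch$, proving $\convhull{E_1 \cup E_2} \subseteq E\ch$.

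The main obstacle will be purely notational: matching the block structure of $A_{3,1}, A_{3,2}, A_{3,0}$ and the slack identity block to the correct row partition (the four slack sub-blocks of sizes $m_1, m_1, m_2, m_2$) and confirming the $\beta_0$-scaled terms in $b\ch$ carry the correct signs. The underlying convexity argument is the same as in \cite[Thm. 5]{raghuraman2020set_ops_conzono}; the only ellipsotope-specific observation is that the bound $|\beta_i\idx{j}| \leq \lambda$ (or $1-\lambda$) inherited from the $p$-norm ball product is exactly what is needed to keep each slack coordinate within $[-1,1]$, so no tightening of the $p$-norm balls is required and the index set $\idxset\ch$ in \eqref{eq:convhull} correctly captures all feasibility constraints.
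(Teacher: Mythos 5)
Your proposal is correct and follows essentially the same route as the paper's proof: the same reparameterization $\beta_0 = 2\lambda-1$, $\beta_1 = \lambda\alpha_1$, $\beta_2 = (1-\lambda)\alpha_2$, with the slack coordinates solved from the third constraint block, exactly as in the adaptation of \cite[Theorem 5]{raghuraman2020set_ops_conzono}. If anything, your explicit verification that each slack coordinate lands in $[-1,2\lambda-1]\subset[-1,1]$ via $\abs{\beta_1\idx{i}}\leq\lambda$ is more careful than the paper's remark that $\idxset\ch$ enforces $\norm{\slackvar}_\infty\leq 1$ ``by construction.''
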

\begin{proof}
We must show that, for any $x\in\convhull{E_1 \cup E_2}$, $x\in E\ch$. 
If $x\in\convhull{E_1 \cup E_2}$, then $\exists \ x_1\in E_1$, $x_2\in E_2$, and $\lambda \in [0,1]$ such that $x = \lambda x_1 + (1 - \lambda) x_2$.
\begin{subequations}\label{eq:in_convhull_conds}
\begin{align}
    &x = \lambda x_1 + (1 - \lambda) x_2, \quad \lambda \in [0,1] \\
    &x_1 = c_1 + G_1 \gamma_1, \quad ||\gamma_1\idx{J}||_p \leq 1 \ \forall\ J \in \idxset_1,\ A_1\gamma_1 = b_1, \\
    &x_2 = c_2 + G_2 \gamma_2, \quad ||\gamma_2\idx{J}||_p \leq 1 \ \forall\ J \in \idxset_2,\ A_2\gamma_2 = b_2. 
\end{align}
\end{subequations}
To show $x\in E\ch$, we must show that there exists $\coef\in\R^{3m_3+1}$ such that $x = c\ch + G\ch\coef$ with $\norm{\coef\idx{J}}_p \leq 1\ \forall\ J \in \idxset\ch$ and $A\ch\coef = b\ch$.
Following the approach in \cite[Theorem 5]{raghuraman2020set_ops_conzono}, pick $\coef = (\coef_1, \coef_2, \coef_0, \slackvar)$ for which
\begin{subequations}\label{eq:convhull_coeffs}
\begin{align}
    &\coef_1 = \lambda\gamma_1, \quad \coef_2 = (1-\lambda)\gamma_2,\quad \coef_0 = 2\lambda - 1,\ \regtext{and} \\
    &\slackvar = -\tfrac{1}{2}\ones_{2\ngen_3\times 1} - (A_{31}\coef_1 + A_{32}\coef_2 + A_{30})\coef_0,
\end{align}
\end{subequations}
where $\coef_1\in\R^{m_1}$, $\coef_2\in\R^{m_2}$, $\coef_0\in\R$, and $\slackvar\in\R^{2m_3}$.
Substituting \eqref{eq:convhull_coeffs} into \eqref{eq:in_convhull_conds}, we can rewrite (\ref{eq:in_convhull_conds}) as
\begin{align}
    x &= \lambda (c_1 + G_1 \gamma_1) + (1 - \lambda) (c_2 + G_2 \gamma_2) \\
      &= \frac{c_1}{2}(1+\coef_0) + G_1\coef_1 + \frac{c_1}{2}(1-\coef_0)c_2 + G_2\coef_2\ \regtext{and}\\
      &= \frac{c_1+c_2}{2} + G_1\coef_1 + G_2\coef_2 + \frac{c_1-c_2}{2}\coef_0.\label{eq:convhull_final_step}
\end{align}
Next, plugging $\coef$ into \eqref{eq:convhull}, we have
\begin{subequations}\label{eq:convhull_etope_def_rollout}
\begin{align}
    &x = \frac{c_1+c_2}{2} + G_1\coef_1 + G_2\coef_2 + \frac{c_1-c_2}{2}\coef_0 + \zeros\slackvar, \\
    &\norm{\coef_1\idx{J}}_p \leq 1 \ \forall\ J \in \idxset_1, \ \norm{\coef_2\idx{J}}_p \leq 1 \ \forall\ J \in \idxset_2,  \\
    &|\coef_0| \leq 1,\ \norm{\slackvar}_\infty \leq 1,\ \regtext{and}\ A\ch\beta = b\ch.
\end{align}
\end{subequations}
Notice that $\idxset\ch$ enforces $\norm{\slackvar}_\infty \leq 1$ by construction.
By comparing \eqref{eq:convhull_final_step} to \eqref{eq:convhull_etope_def_rollout}, the proof is complete.
\end{proof}

\end{document}